\newtheorem{theorem}{Theorem}[section]
\newtheorem{proposition}[theorem]{Proposition}
\newtheorem{lemma}[theorem]{Lemma}
\newtheorem{corollary}[theorem]{Corollary}
\newtheorem{definition}[theorem]{Definition}
\newcommand{\bE}{\ensuremath{\mathbf{E}}}
\newcommand{\polylog}{\operatorname{polylog}}
\newcommand{\ID}{\operatorname{ID}}
\newcommand{\poly}{\operatorname{poly}}
\newcommand{\pal}{\operatorname{Pal}}
\newcommand{\extd}{\ensuremath{{\overline{d}}}}
\begin{document}

\author[David G. Harris and Johannes Schneider and Hsin-Hao Su]{
{\sc David G.~Harris}$^{2}$
\and
{\sc Johannes Schneider}$^{3}$
\and
{\sc Hsin-Hao Su}$^{4}$
}

\setcounter{footnote}{0}

\addtocounter{footnote}{1}
\footnotetext{This is an extended version of a paper which appeared in the 2016 Symposium on Theory of Computing (STOC)}

\addtocounter{footnote}{1}
\footnotetext{Department of Computer Science, University of Maryland,
College Park, MD 20742.
Research supported in part by NSF Awards CNS-1010789 and CCF-1422569.
Email: \texttt{davidgharris29@gmail.com}.}

\addtocounter{footnote}{1}
\footnotetext{University of Liechtenstein, Vaduz, Liechtenstein. Email: \texttt{johannes.schneider@uni.li}.}

\addtocounter{footnote}{1}
\footnotetext{University of North Carolina at Charlotte. Part of the work was done while at MIT and University of Michigan, supported by NSF Awards CCF-0939370, CCF-1217338, BIO-1455983, and AFOSR FA9550-13-1-0042. Email: \texttt{hsinhaosu@uncc.edu}.}

\title{Distributed $(\Delta+1)$-Coloring in Sublogarithmic Rounds$^{1}$}

\maketitle
\section*{Abstract} \label{sec:Abstract}
\noindent
 We give a new randomized distributed algorithm for $(\Delta+1)$-coloring in the LOCAL model, running in $O(\sqrt{\log \Delta})+ 2^{O(\sqrt{\log \log n})}$ rounds in a graph of maximum degree~$\Delta$. This implies that the $(\Delta+1)$-coloring problem is easier than the maximal independent set problem and the maximal matching problem, due to their lower bounds of $\Omega \left( \min \left( \sqrt{\frac{\log n}{\log \log n}}, \frac{\log \Delta}{\log \log \Delta}  \right)  \right)$ by Kuhn, Moscibroda, and Wattenhofer [PODC'04]. Our algorithm also extends to list-coloring where the palette of each node contains $\Delta+1$ colors. We extend the set of distributed symmetry-breaking techniques by performing a decomposition of graphs into dense and sparse parts.

\bigskip

\noindent Categories and Subject Descriptors: G.2.2 \textbf{[Graph Theory]}: Graph Algorithms; F.2.2 \textbf{[Nonnumerical Al-
gorithms and Problems]}: Computations on Discrete Structures

\medskip

\noindent General Terms: Algorithms, Theory

\medskip

\noindent Additional Key Words and Phrases: Distributed Networks, Vertex Coloring

\section{Introduction} \label{sec:Intro}
Given a graph $G = (V,E)$, let~$n = |V|$ denote the number of vertices and let~$\Delta$ denote the maximum degree. The $k$-coloring problem is to assign each vertex~$v$ a color~$\chi(v) \in \{1,2,\ldots, k\}$ such that no two neighbors are assigned the same color.

In this paper, we study the $(\Delta+1)$-coloring problem in the distributed $\mathsf{LOCAL}$ model. In this model, vertices host processors and operate in synchronized rounds. In each round, each vertex sends a message of arbitrary size to each of its neighbors, receives messages from its neighbors, and performs (unbounded) local computations. The time complexity of an algorithm is measured by the number of rounds until every vertex commits its output -- in our case, its color. The $\mathsf{LOCAL}$ model is a model for investigating what {\it local information is needed for each vertex to compute its own output}. An $r$-round algorithm in the $\mathsf{LOCAL}$ model implies that each vertex only uses information in its $r$-neighborhood to compute the output, and vice versa. 

Graph coloring is one of the central problems of graph theory, with numerous applications to algorithms and combinatorics. The  $(\Delta+1)$-coloring problem is the most celebrated case, because for every graph there exists a $(\Delta+1)$ coloring, which can be found via a simple (sequential) greedy algorithm. Furthermore, even if $\Delta$-colorings exist for some graphs, there are examples where it cannot be solved locally; for example, 2-coloring a ring requires $\Omega(n)$ rounds. 

Vertex-coloring in the distributed model, in particular, has applications as a subroutine in other distributed algorithms (for example, the Lov\'{a}sz Local Lemma \cite{fischer2017sublogarithmic}) and scheduling problems (for example, radio network broadcasts~\cite{CK85, CS89, RP89, ET90}). It is typical in such applications that solution quality depends on the number of colors used. Thus $(\Delta+1)$-coloring is a natural problem, as it leads to the optimal schedules that one can hope to guarantee as a function of $\Delta$.

\subsection{Previous results on distributed graph coloring}
The distributed coloring problem, and variants, have a long history dating back to the 1980's. Table \ref{ResultsOverview} summarizes the results for the most prominent coloring problem, i.e. $(\Delta+1)$-coloring. Several deterministic algorithms have been developed with a run-time of  $O(f(\Delta) + \log^{*} n)$~\cite{bar15a, BEK09,Kuhn2006On,linial92,GPS88,GP87}.  The latter term is necessary as 3-coloring a ring require~$\Omega(\log^{*} n)$ rounds for both deterministic and randomized algorithms \cite{linial92, Noar91}.  A breakthrough by Barenboim~\cite{bar15a} gave an algorithm running in $O(\Delta^{3/4} \log \Delta + \log^{*} n)$ rounds, subsequently improved to $O(\sqrt{\Delta}\log^{2.5}\Delta + \log^{*} n)$ by Fraigniaud et al.~\cite{FraigniaudHK15}. These bounds have a sublinear dependence on~$\Delta$, which is notable since~$\Omega(\Delta)$ lower bounds hold for related problems in more restrictive settings~\cite{ goo14, hef16, hir12, Kuhn2006On, SV93}.

The randomized algorithms considered in this paper are those which return a correct answer w.h.p.~({\it with high probability}, which is with probability at least~$1 - n^{-K'}$, where~$K'$ is an arbitrary constant~$K'>0$.) Randomized approaches can be traced back to the $O(\log n)$ rounds maximal independent set (MIS) algorithm of Alon, Babai, and Itai~\cite{alon86} and Luby~\cite{lub86}. The $O(\log n)$ upper bound lasted until Schneider and Wattenhofer gave an algorithm of running time~$O(\log \Delta + \sqrt{\log n})$  \cite{Sch10}. Then, Barenboim et al.~\cite{BEPS16} improved the dependence on $n$ to $2^{O(\sqrt{\log \log n})}$. All these algorithms require $\Omega(\log n)$ rounds when $\Delta = n^{c}$ for some constant $0<c \leq 1$. 

\subsection{Our contributions.} We give a randomized algorithm running in time~$O(\sqrt{\log \Delta}) + 2^{O(\sqrt{\log \log n})}$ rounds w.h.p., which is the first algorithm that runs in $o(\log n)$ rounds for every graph. Moreover, this implies a separation between the $(\Delta+1)$-coloring and the MIS problem. We elaborate our contributions in the following:

\begin{table*}
\centering
\begin{tabular}{|l | l r |l r | }    \hline
   Bounds  & Randomized & & Deterministic &\\ \hline
    \multirow{ 8}{*}{Upper} & $O(\sqrt{\log \Delta})+2^{O(\sqrt{\log \log n})}$ & [This paper]  & $O(\sqrt{\Delta}\log^{2.5}\Delta + \log^{*} n)$ & \cite{FraigniaudHK15}  \\ \cline{2-5}
      &$O(\log \Delta)+2^{O(\sqrt{\log \log n})}$ & \cite{BEPS16} & $O(\Delta^{3/4}\log \Delta+ \log^* n)$ & \cite{bar15a}  \\ \cline{2-5}
     &$O(\log \Delta+\sqrt{\log n})$  &\cite{Sch10}  &  $O(\Delta+ \log^* n)$ & \cite{BEK09}  \\ \cline{2-5}
    &$O(\log n)$ & \cite{lub86,alon86,Joh99} &   $O(\Delta\log \Delta + \log^* n)$ & \cite{Kuhn2006On} \\ \cline{2-5}
    & & & $O(\Delta^2 + \log^{*} n)$ & \cite{linial92, GPS88} \\ \cline{4-5}
    & & & $O(\Delta \log n)$ & \cite{GPS88} \\ \cline{4-5}
    & & & $\Delta^{O(\Delta)} + O(\log^{*} n)$  & \cite{GP87} \\ \cline{4-5}
    & $ $ &    & $2^{O(\sqrt{\log n})}$ & \cite{panc92}   \\ \hline
    \hline
    Lower& $\Omega(\log^{*} n)$ & \cite{Noar91} & $\Omega(\log^{*} n)$ & \cite{linial92}   \\ \hline
  \end{tabular}
\caption{Comparison of $(\Delta +1)$-coloring algorithms and lower bounds} \label{ResultsOverview}
\vspace{-5mm}
\end{table*}


\begin{enumerate}
\item {\bf Separation between the coloring problem and the MIS problem.}  The coloring problem and the MIS problem are closely related; for example, given a $(\Delta+1)$-coloring one can compute a MIS in $\Delta+1$ rounds by letting a node with color~$i$ join the MIS in round~$i$ (if no neighbor joined previously). Conversely, Lov\'{a}sz describes how any MIS algorithm can be used for $(\Delta+1)$-coloring in the same running time by simulating it on a blow-up graph \cite{lov79} (this result has also been mentioned in \cite{lub86} and \cite{alon86}). Kuhn, Moscibroda, and Wattenhofer \cite{kuh10} constructed a family of graphs with $\Delta = 2^{O(\sqrt{\log n \log \log n})}$ for which computing an MIS or a maximal matching requires at least~$\Omega(\sqrt{\frac{\log n}{\log \log n}})$ rounds. 
To this date, it has been unclear whether $(\Delta +1)$-coloring,  MIS and maximal matching are equally hard problems.

As our algorithm computes $(\Delta+1)$-colorings in the above graphs in $O((\log n \log \log n)^{1/4})$ rounds, we show that $(\Delta+1)$-coloring is an easier problem.

\item {\bf Breaking the $O(\log n)$ barrier.} From a technical perspective, the {\it union bound barrier} for randomized distributed algorithms was observed in \cite{bar15}. Roughly speaking, randomized algorithms generally conduct a series of trials at each vertex. When the trial at a vertex succeeds for the first time, then it commits its output. If the failure probability of each trial is at least~$p$, then it takes $\Omega(\log_{1/p} (1/\delta) )$ trials to ensure a vertex succeed with probability at least~$1 - \delta$ for~$0 < \delta < 1$.  However, since we require all the nodes to correctly output their answers at the end of the algorithm, $\delta$ has to be bounded by $1/\poly(n)$ in order to take the union bound over every vertex. Therefore, many randomized algorithms require at least~$\Omega(\log_{1/p} n)$ rounds. 
	
Barenboim et al.~\cite{bar15a} used the {\it graph shattering technique} to circumvent the union bound barrier. The basic idea is to run the experiments for~$\Omega(\log_{1/p} \Delta)$ rounds so that the probability each vertex failed is at most~$1/\poly(\Delta)$ after this first phase. The size of each connected component in the graph induced by unsuccessful vertices then becomes $\polylog (n)$. Then, one may run a deterministic algorithm on each component in parallel in the second phase. The running time of Panconesi and Srinivasan's deterministic algorithm is~$2^{O(\sqrt{\log N})}$ on graphs of size $N$~\cite{panc92}. Since the size of the component is exponentially smaller than the original graph, the running time scales down correspondingly to be~$2^{O(\sqrt{\log \log n})}$. 
	
The graph shattering technique does not directly apply to $(\Delta+1)$-coloring, since every algorithm up to this date has had a failure probability per round (and node) lower-bounded by constants. To appreciate the difficulty of achieving sub-constant failure probability, consider the following natural approach to the $(\Delta+1)$-coloring problem: each vertex selects a color randomly and commits to the color if no neighbors has selected the same color. If the graph contains a clique of size $\Delta$, then the probability that a vertex in the clique successfully colors itself is $(1-\frac{1}{\Delta+1})^{\Delta} = \Theta(1)$. 
(The MIS problem is even harder, and it was only recently shown by \cite{Moh16} how to achieve {\it amortized} constant  failure probability in $O(1)$~rounds.)

To break the $O(\log n)$ barrier, we introduce novel ideas to address symmetry-breaking problems. We develop a network decomposition that splits a graph into sparse and dense parts and tackles them separately. The key is that vertices boost their probability of getting colored by using either the properties of dense parts or those of sparse parts (see Section \ref{sec:technical}). Our algorithm does not directly use graph shattering, although we use a number of previous algorithms as subroutines which do use this technique.
\end{enumerate}

Our algorithm extends to a closely related generalization of the vertex-coloring problem known as \emph{list-coloring}. Here, there is a set of colors~$\mathcal C$, and each vertex is equipped with a palette~$\pal(v) \subseteq \mathcal C$ of size $|\pal(v)| \geq \Delta+1$; each vertex selects one color from its palette, and no two neighbors can be assigned the same color. $(\Delta+1)$-coloring is a special case in which $\mathcal C = \{1, \dots, \Delta + 1 \}$ and every vertex has the same palette of size~$\Delta+1$. 

\subsection{Technical summary}\label{sec:technical}

The possibility of coloring a vertex with super-constant probability was first observed in \cite{Sch10, elk15}, in the setting where there are~$(1+\Omega(\epsilon))\Delta$ colors for some~$\epsilon > 0$. The idea is that vertices try for multiple colors on each trial, if the palette size exceeds a node's degree. This yields a~$O(\log (1/\epsilon) + \log^{*} n)$ rounds algorithm for the first phase. Combined with the graph shattering technique, the algorithm runs in $O(\log (1/\epsilon)) + 2^{O(\sqrt{\log \log n})}$ rounds. Elkin, Pettie, and Su~\cite{elk15} observed that if a graph is $(1-\epsilon)$-locally-sparse (each vertex participates in at most $(1-\epsilon)\binom{\Delta}{2}$ triangles), then it can be reduced to the coloring problem with $(1+\Omega(\epsilon^2))\Delta$ colors.


It is thus the dense parts of the graph that become bottlenecks. However, if a subgraph is dense, then it is likely to have small (weak) diameter. (The {\it weak diameter} of a subgraph $H$ is the maximum distance measured in $G$ between any pair of vertices $u,v \in H$.) A single vertex in $H$ can read in all the information in~$H$, make a decision, and broadcast it to $H$ in time proportional to its weak diameter.

We develop a network decomposition procedure based on \emph{local sparseness}. Our decomposition algorithm is targeted towards identifying dense components of constant weak diameter and sparse components in a constant number of rounds. Roughly speaking, a sparse vertex is one which participates in at most $(1-\epsilon) \binom{\Delta}{2}$ triangles in its neighborhood, where $\epsilon > 0$ is a parameter that we will carefully choose. At the same time, we would also like to bound the number of neighbors of a dense component that are not members of the dense component itself, called \emph{external neighbors}. This step is necessary to bound the influence of color choices of nodes in one component on other components. This mechanism may help to leverage algorithms for other distributed problems that can handle either dense or sparse graphs well.

First, we ignore the sparse vertices. Since each dense component has constant weak diameter, it can elect a leader to assign a color to every member so that no intra-component conflicts occur, i.e.~the endpoints of the edges inside the same component are always assigned different colors. Meanwhile, we hope that the assignments are random enough so that the chance of inter-component conflicts will be small.
Combined with the property of the decomposition that the number of external neighbors is bounded, we show that the probability that a vertex remains uncolored is roughly $O(\epsilon)$ in each round. After $O(\log_{1/\epsilon} \Delta)$ rounds, the degree of each vertex becomes sufficiently small so that the algorithm of Barenboim et al.~\cite{bar15a} can handle the residual graph efficiently.

For the sparse vertices we analyze a preprocessing {\it initial coloring step} of the algorithm. In a similar vein as \cite{elk15}, we show that there will be an $\Omega(\epsilon^2 \Delta)$ gap between the palette size and the degree due to the sparsity. The gap remains while the dense vertices are colored. So, we will be able to color the sparse vertices by using the algorithm of Elkin et al.~\cite{elk15}, which requires $O(\log (1/\epsilon)) + \exp(O(\sqrt{\log \log n}))$ rounds. In contrast to \cite{elk15}, our analysis generalizes to the list-coloring problem. By setting $\epsilon = 2^{-\Theta(\sqrt{\log \Delta})}$, we balance the round complexity between the dense part and the sparse part, yielding the desired running time.

The main technical challenge lies in the dense components. In each component, we need to generate a random proper coloring so that each vertex has a small probability of receiving the same color as one of its external neighbors.
 We give a process for generating a proper coloring where the probability that a vertex gets any color from its palette is close to uniform. Additionally, we need to show that the structure of the decomposition is maintained in the next round for appropriately scaled-down parameters.

\subsection{Overview}
In Section~\ref{related-work}, we review related algorithms for network decomposition and coloring.

In Section~\ref{decomp-sec}, we state our network decomposition.

In Section~\ref{full-algorithm-sec}, we outline the full algorithm for list-coloring. It consists of two steps: an initial coloring step applied to all vertices, and multiple rounds of dense coloring.

In Section~\ref{sec:firstcoloring}, we describe the initial coloring step for creating the gap between the palette size and the degree for sparse vertices.

In Section~\ref{color-dense}, we describe a single round of the dense coloring procedure and analyze the behavior of the graph structure. 

In Section~\ref{sec:solver}, we finish our analysis by solving recurrence relations for dense components which yields the overall algorithm run time. 

In Section~\ref{list-color-locally-sparse-sec}, we apply the initial coloring step to give a full algorithm for locally-sparse graphs; this extends the algorithm of \cite{elk15} to list-coloring.

\section{Related work}\label{related-work}
A variety of network decompositions have been developed to solve distributed computing problems. Awerbuch et al.~\cite{awer89} introduced the notion of $(d,c)$-decompositions where each component has diameter~$d$ and the contracted graph is $c$-colorable. They give a deterministic procedure to obtain a $(2^{O(\sqrt{\log n \log \log n})}, 2^{O(\sqrt{\log n \log \log n})})$-decomposition, which can be used to deterministically compute a $(\Delta+1)$-coloring and MIS in $2^{O(\sqrt{\log n \log \log n})}$ rounds. Panconesi and Srinivasan~\cite{panc92} showed how to obtain a $(2^{O(\sqrt{\log n})}, 2^{O(\sqrt{\log n})})$-decomposition, yielding $2^{O(\sqrt{\log n})}$-time algorithms for $(\Delta+1)$-coloring and MIS. Linial and Saks~\cite{lin93} gave a randomized algorithm for obtaining a $(O(\log n), O(\log n))$-decomposition in $O(\log^2 n)$ rounds. Barenboim et al.~\cite{bar15} gave a randomized algorithm for obtaining $(O(1),O(n^\epsilon))$-decompositions in constant rounds.

Reed~\cite{ree98} introduced the structural decomposition to study the chromatic number of graphs of bounded clique size (see~\cite{MR01} for a detailed exposition). This was later used for applications including total coloring, frugal coloring, and computation of the chromatic number~\cite{ree99, mol14, mol98, mol10}. Our network decomposition method is inspired by theirs in the sense that they showed a graph can be decomposed into a sparse component and a number of dense components. However, as their main goal was to study the existential bounds, the properties of the decomposition between our needs are different. For example, the diameter is an important constraint in our case. Also, our decomposition must be computable in parallel, while theirs is obtained sequentially.

The $(\Delta + 1)$-coloring algorithms are briefly summarized in Table \ref{ResultsOverview}. Barenboim and Elkin's monograph~\cite{bar13} contains an extensive survey of coloring algorithms. Faster algorithms are available if we use more than $(\Delta+1)$ colors. For deterministic algorithms, Linial~\cite{linial92} and Szegedy and Vishwanathan~\cite{SV93} gave algorithms for obtaining a $O(\Delta^2)$-coloring in $O(\log^{*} n)$ rounds. Barenboim and Elkin~\cite{elk10} showed how to obtain an $O(\Delta^{1+\epsilon})$-coloring in $O(\log \Delta \cdot \log n)$ rounds. For randomized algorithms, Schneider and Wattenhofer~\cite{Sch10} showed that an $O(\Delta \log^{(k)}n + \log^{1+1/k} n )$-coloring can be obtained in $O(k)$ rounds. Combining the results in~\cite{Sch10} with Kothapalli et al.~\cite{KSOS06}, an $O(\Delta)$-coloring can be obtained in $O(\sqrt{\log n})$ rounds. Barenboim et al.~\cite{BEPS16} showed it can be improved to $2^{O(\sqrt{\log \log n})}$ rounds.

On the other hand, sparse-type graphs can be colored using significantly less than $(\Delta+1)$ colors. Panconesi and his co-authors~\cite{GP00, DGP98, GP97, PS97} developed a line of randomized algorithms for edge-coloring (i.e. coloring the line graph, which is sparse) and Brook-Vizing colorings in the distributed setting. For example, \cite{GP00} showed a $O( \frac{\Delta}{\log \Delta})$-coloring for girth-5 graphs in $O(\log n)$ rounds, provided $\Delta = (\log n)^{1 + \Omega(1)}$; this was generalized to triangle-free graphs by Pettie and Su~\cite{PS13}. The restriction on the size of $\Delta$ can be removed via the distributed Lov\'{a}sz Local Lemma~\cite{CPS17}.

Schneider et al.~\cite{sch13} investigated distributed coloring where the number of colors used depends on the chromatic number~$\chi(G)$. Their algorithm requires $(1-\frac{1}{O(\chi(G))})\cdot(\Delta+1)$ colors and running time of $O(\log \chi(G) + \log^* n)$ for graphs with $\Delta = \Omega(\log^{1+1/\log^* n} n)$ and $\chi(G) = O(\Delta/\log^{1+1/\log^* n} n)$.

More efficient algorithms for $(\Delta+1)$-coloring exist for very dense graphs, e.g.~a deterministic O($\log^* n$) algorithm for growth bounded graphs (e.g.~unit disk graphs)~\cite{sch10opt}, as well as for many types of sparse graphs~\cite{BEPS16,elk15,PS13}, e.g. for graphs of low arboricity. 
Elkin et al.~\cite{elk15} described a $(\Delta+1)$-coloring algorithm for locally-sparse graphs. We will extend this result to cover list-colorings as well.

As we have discussed, the MIS problem and the coloring problems are related. An MIS can be computed in $O(\Delta+\log^{*} n)$ rounds deterministically ~\cite{BEK09} and in $2^{O(\sqrt{\log n})}$ rounds randomly~\cite{panc92}. More recently, Ghaffari~\cite{Moh16} reduced the randomized complexity of MIS to $O(\log \Delta) + 2^{O(\sqrt{\log \log n})}$. Whether an MIS can be obtained in polylogarithmic deterministic time or sublogarithmic randomized time remain interesting open problems.

\section{Network decomposition and sparsity} \label{decomp-sec}
In this section, we define a structural decomposition of the graph~$G$ into \emph{sparse} and \emph{dense} vertices. We measure  these notions with respect to a parameter~$\epsilon \in [0,1]$.

\begin{definition}[Friend edge]
An edge~$uv$ is a \emph{friend} edge if $u$ and $v$ share at least~$(1-\epsilon)\Delta$ neighbors, i.e. $|N(u) \cap N(v)| \geq (1-\epsilon) \Delta$. We define $F \subseteq E$ to be the set of friend edges.

For any vertex~$u$, vertex~$v$ is a friend of $u$ if~$uv \in F$; we denote the friends of $u$ by~$F(u)$.
\end{definition}

\begin{definition}[Dense and sparse vertices]
A vertex~$v \in V$ is \emph{dense} if it has at least~$(1-\epsilon)\Delta$ friends. Otherwise, it is \emph{sparse}.

We write $V^{\text{dense}} \subseteq V$ for the set of dense vertices in~$G$, and $V^{\text{sparse}}$ for the set of sparse vertices in~$G$.
\end{definition}

Next, we define the \emph{weak diameter}; this measures the diameter of a subgraph, while allowing shortcuts using nodes from the original graph.
\begin{definition}[Weak diameter]
Let $H \subseteq G$ be an induced subgraph of~$G$. For vertices~$u,v\in V$, let~$d(u,v)$ denote the distance between $u$ and $v$ in~$G$. The weak diameter of $H$ is defined to be $\max_{u,v \in H} d(u,v)$.
\end{definition}

Let $C_1, \ldots, C_k$ be the connected components of the subgraph $H = (V^{\text{dense}}, E_H ) \subseteq G$, where $E_H = \{uv \mid \mbox{$u,v \in V^{\text{dense}}$ and $uv \in F$} \}$. That is, they are the connected components induced by friend edges and dense vertices. The vertices of $G$ are partitioned disjointly as $V = V^{\text{sparse}} \sqcup V^{\text{dense}} = V^{\text{sparse}} \sqcup C_1 \sqcup \dots \sqcup C_k$. We refer to each component~$C_j$ as an \emph{almost-clique}. See Figure~\ref{fig:decomposition}.

\begin{figure}[t]
\centering
\includegraphics[scale = 0.35]{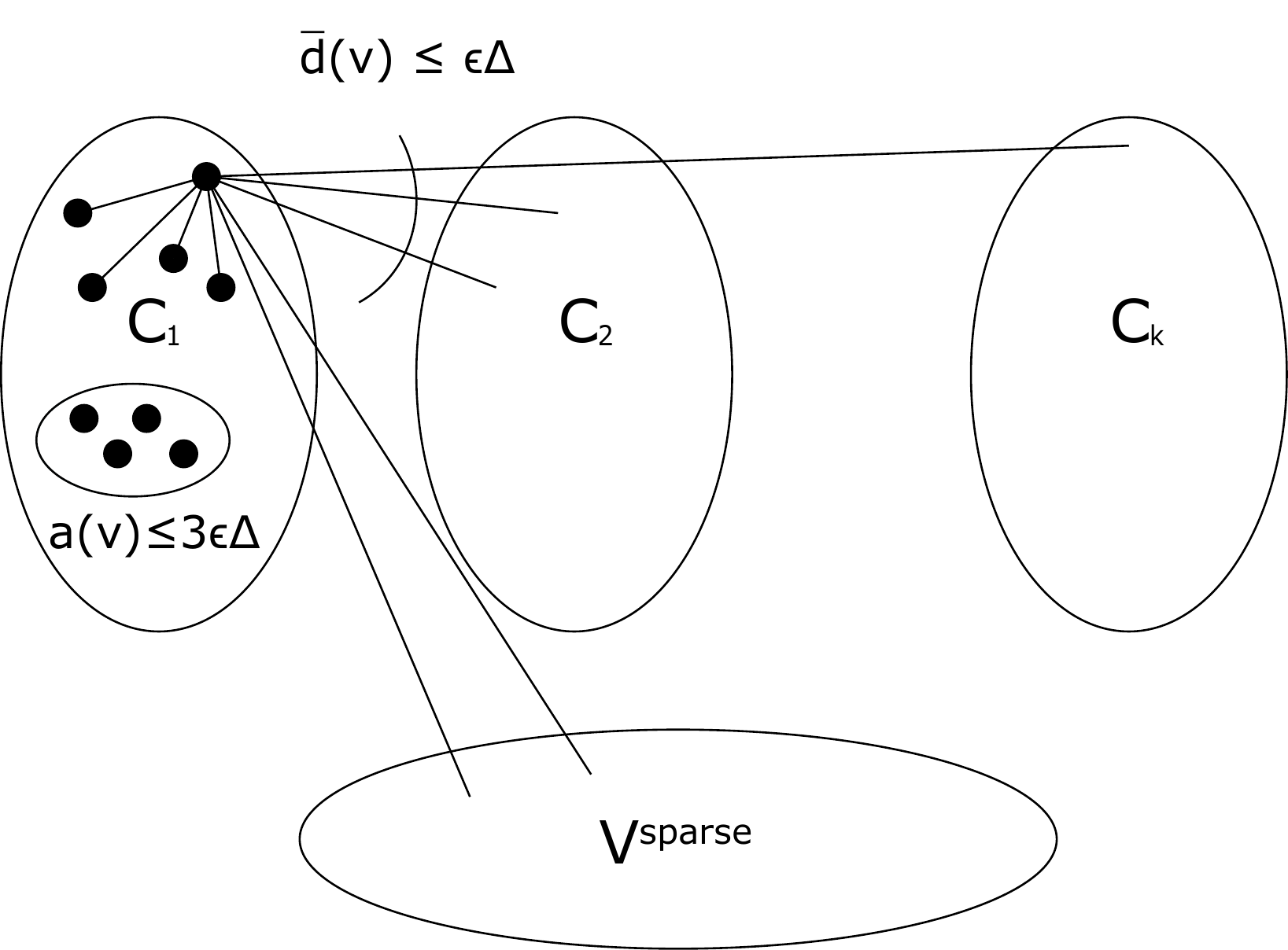}
\caption{An illustration of a network decomposition.}
\label{fig:decomposition}
\end{figure}

\begin{lemma}\label{lem:diameter}
Suppose $\epsilon < 1/5$. Then, for any vertices $x, y \in C_j$, we have $|N(x) \cap N(y)| \geq (1 - 2 \epsilon) \Delta$.
\end{lemma}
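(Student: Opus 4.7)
The plan is to show that the friend-diameter of $C_j$ (i.e., the hop-distance in $H$) is at most $2$ under the hypothesis $\epsilon < 1/5$; the claimed intersection bound then falls out immediately.

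The main ingredient is a triangle-like inequality: since $N(u)\cap N(v)$ and $N(v)\cap N(w)$ both sit inside $N(v)$ (of size at most $\Delta$), inclusion-exclusion gives
\[ |N(u) \cap N(w)| \geq |N(u) \cap N(v)| + |N(v) \cap N(w)| - \Delta. \]
Iterating this along any friend path $x = u_0, u_1, \dots, u_k = y$ in $H$ (each consecutive pair being a friend edge, so contributing $(1-\epsilon)\Delta$) yields, by induction on $k$, the bound $|N(x) \cap N(y)| \geq (1 - k\epsilon)\Delta$. In particular, friend-distance at most $2$ already suffices to prove the lemma with the claimed $(1-2\epsilon)\Delta$ slack.

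To rule out larger friend-distance, suppose for contradiction that some shortest friend-path in $C_j$ has length $k \geq 3$, with initial segment $x = u_0, u_1, u_2, u_3$. The iterated bound gives $|N(x) \cap N(u_3)| \geq (1 - 3\epsilon)\Delta$, so $|N(x) \cup N(u_3)| \leq 2\Delta - (1 - 3\epsilon)\Delta = (1 + 3\epsilon)\Delta$. Since $x, u_3 \in V^{\text{dense}}$, both $|F(x)|, |F(u_3)| \geq (1-\epsilon)\Delta$, and because $F(x) \subseteq N(x)$ and $F(u_3) \subseteq N(u_3)$, a second application of inclusion-exclusion inside the universe $N(x) \cup N(u_3)$ yields
\[ |F(x) \cap F(u_3)| \geq 2(1-\epsilon)\Delta - (1+3\epsilon)\Delta = (1 - 5\epsilon)\Delta, \]
which is strictly positive when $\epsilon < 1/5$. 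Any common friend $w \in F(x)\cap F(u_3)$ then shortcuts the chosen path to $x, w, u_3, u_4, \dots, u_k$, of length $k-1$, contradicting minimality of $k$.

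Consequently, every pair in $C_j$ has friend-distance at most $2$, and the lemma follows by cases: if $xy \in F$ (or $x=y$) the bound is immediate from the definition of a friend edge, while if $x$ and $y$ share a common friend $w$ the triangle-like inequality applied at $w$ gives $|N(x) \cap N(y)| \geq (1-\epsilon)\Delta + (1-\epsilon)\Delta - \Delta = (1-2\epsilon)\Delta$. The one delicate step is the diameter reduction --- combining the path bound with density to force a common friend; the rest is routine inclusion-exclusion.
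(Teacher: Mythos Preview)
Your argument has a genuine gap in the diameter-reduction step. You produce a common friend $w \in F(x)\cap F(u_3)$ and claim that $x,w,u_3,u_4,\dots,u_k$ is a shorter path in $H$, contradicting minimality of $k$. But $H=(V^{\text{dense}},E_H)$ contains only \emph{dense} vertices, and nothing you have shown forces $w$ to be dense: being a friend of two dense vertices does not make $w$ dense. If $w\in V^{\text{sparse}}$ then $w\notin H$, the edges $xw,\,wu_3$ are not in $E_H$, and there is no shortcut. So the contradiction does not go through, and your conclusion that the friend-diameter of $C_j$ is at most $2$ is unjustified (and is not in fact needed for the lemma).

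The paper's proof sidesteps this by never trying to shorten the path. Instead it runs an induction along the friend path $x=u_0,\dots,u_t=y$ maintaining the invariant $|N(x)\cap N(u_i)|\ge(1-2\epsilon)\Delta$ directly. The induction step first combines the hypothesis with the friend edge $u_{i-1}u_i$ via your triangle-like inequality to get $|N(x)\cap N(u_i)|\ge(1-3\epsilon)\Delta$; then, exactly as you do, density of $x$ and $u_i$ yields $|F(x)\cap F(u_i)|\ge(1-5\epsilon)\Delta>0$, hence a common friend $w$. The crucial difference is how $w$ is used: rather than inserting $w$ into the path, the paper applies the triangle-like inequality through $w$ (which only needs $|N(w)|\le\Delta$, not that $w$ be dense) to boost the bound back up to $|N(x)\cap N(u_i)|\ge(1-2\epsilon)\Delta$. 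This keeps the invariant stable along an arbitrarily long friend path, whereas your iterated bound $(1-k\epsilon)\Delta$ degrades with $k$ and forces you into the flawed shortcut argument to compensate.
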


\begin{proof}
As $x, y$ are in the same component~$C_j$, there is a path of friend edges $x = u_0, \dots, u_t = y$ connecting them. We claim that $|N(x) \cap N(u_i)| \geq (1 - 2 \epsilon) \Delta$ for all~$i \geq 1$. We will show this by induction on~$i$. The base case~$i = 1$ follows as $x u_1$ is a friend edge.

Now, consider the induction step. As $u_{i-1} u_i$ is a friend, $|N(u_i) \cap N(u_{i-1})| \geq (1 - \epsilon) \Delta$. By the induction hypothesis, $|N(x) \cap N(u_{i-1})| \geq (1 - 2 \epsilon) \Delta$. We thus have:
\begin{align*}
| N(x) \cap N(u_i) | &\geq | N(x) \cap N(u_{i-1}) \cap N(u_i) | \\
&= | N(u_{i-1}) \cap N(u_i) | + |N(u_{i-1}) \cap N(x) | - | (N(u_{i-1}) \cap N(u_i)) \cup ( N(u_{i-1}) \cap N(x)) | \\
&\geq | N(u_{i-1}) \cap N(u_i) | + |N(u_{i-1}) \cap N(x) | - | N(u_{i-1}) | \\
&\geq (1 - \epsilon) \Delta + (1 - 2 \epsilon)\Delta - \Delta = (1 - 3 \epsilon) \Delta
\end{align*}

Since $x$ and $u_i$ are dense, we have $|N(x) \setminus F(x)| \leq \epsilon\Delta$ and $|N(u_i) \setminus F(u_i)| \leq \epsilon\Delta$. Therefore, $|F(x) \cap F(u_i)|  = |(N(x) \cap N(u_i))   \setminus (N(x) \setminus F(x)) \setminus (N(u_i) \setminus F(u_i)) |  \geq (1-3\epsilon)\Delta - \epsilon\Delta -\epsilon\Delta \geq (1-5\epsilon)\Delta > 0$.

So $x$ and $u_i$ have a common friend~$w$, such that $|N(x) \cap N(w)| \geq (1-\epsilon)\Delta$ and $|N(u_i) \cap N(w)| \geq (1-\epsilon)\Delta$. So:
\begin{align*}
| N(x) \cap N(u_i) | &\geq | N(x) \cap N(w) \cap N(u_i) | \\
&= | N(w) \cap N(u_i) | + |N(w) \cap N(x) | - | (N(w) \cap N(u_i)) \cup ( N(w) \cap N(x)) | \\
&\geq | N(w) \cap N(u_i) | + |N(w) \cap N(x) | - | N(w) | \\
&\geq (1 - \epsilon) \Delta + (1 - \epsilon)\Delta - \Delta = (1 - 2 \epsilon) \Delta
\end{align*}
\end{proof}

\begin{corollary}\label{cor:diameter}
Suppose $\epsilon < 1/5$. Then all almost-cliques have weak diameter of at most 2.
\end{corollary}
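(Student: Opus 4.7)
The plan is to derive the corollary almost immediately from Lemma~\ref{lem:diameter}. First, I would fix an almost-clique $C_j$ and consider any two vertices $x, y \in C_j$. By Lemma~\ref{lem:diameter}, we have $|N(x) \cap N(y)| \geq (1 - 2\epsilon) \Delta$, and since $\epsilon < 1/5$ this lower bound is strictly positive (in fact, at least $(3/5)\Delta$).

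Next, I would use this to produce a short path in $G$ between $x$ and $y$. Since $N(x) \cap N(y)$ is nonempty, there exists some $w \in V$ adjacent to both $x$ and $y$ in $G$, giving $d(x,y) \leq 2$. The cases $x = y$ and $xy \in E$ are subsumed (distance $0$ or $1$). Crucially, the weak diameter is measured in the full graph $G$, not within $C_j$, so it is legitimate to route the path through an arbitrary vertex $w$ that may lie outside the almost-clique.

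Finally, since $x, y$ were arbitrary vertices in $C_j$, we conclude $\max_{x,y \in C_j} d(x,y) \leq 2$, which is exactly the weak diameter bound. There is no real obstacle here; Lemma~\ref{lem:diameter} has already done the heavy lifting, and the corollary is essentially an unpacking of the definition of weak diameter together with the observation that a positive-sized common neighborhood yields a length-2 path.
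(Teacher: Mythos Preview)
Your proposal is correct and matches the paper's own proof essentially line for line: invoke Lemma~\ref{lem:diameter} to get $|N(x)\cap N(y)|\geq(1-2\epsilon)\Delta>0$, hence a common neighbor, hence $d(x,y)\leq 2$. There is nothing to add.
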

\begin{proof}
By Lemma~\ref{lem:diameter}, any vertices $x, y \in C_j$ have $|N(x) \cap N(y)| \geq (1 - 2\epsilon) \Delta > 0$. In particular, they have a common neighbor.
\end{proof}

A vertex~$v$ in $C_j$ can identify all other members of $C_j$ in $O(1)$ rounds by  the following procedure: Initially, each vertex~$u \in G$ broadcasts the edges incident to $u$ to all nodes within distance 3.  In this way, every vertex~$v$ learns the graph topology of all nodes up to distance 3, which is sufficient to determine whether an edge (both of whose endpoints are within distance 2 of $v$) is a friend edge and whether a vertex (within distance 2) is dense. Since by Corollary \ref{cor:diameter}, all members of $C_j$ are within distance 2 to~$v$, all the members can be identified. Also, the leader of $C_j$ can be elected as the member with the smallest ID.

\begin{definition}[External degree]
For any dense vertex $v \in C_j$, we define~$\extd(v)$, the \emph{external degree} of~$v$, to be the number of \emph{dense} neighbors of $v$ outside~$C_j$. (Sparse neighbors are not counted.)
\end{definition}

\begin{lemma}
\label{network1}
Any dense vertex $v$ has $\extd(v) \leq \epsilon \Delta$.
\end{lemma}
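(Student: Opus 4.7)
The plan is to argue that any dense neighbor of $v$ outside $C_j$ must fail to be a friend of $v$, and then to bound the number of such non-friends using the density of $v$.

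First I would unpack the definitions. Since $v$ is dense, by definition it has at least $(1-\epsilon)\Delta$ friends, so among the at most $\Delta$ neighbors of $v$, at most $\Delta - (1-\epsilon)\Delta = \epsilon\Delta$ neighbors of $v$ are non-friends of $v$.

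Next I would observe the key structural point: if $u$ is a dense neighbor of $v$ and $uv \in F$ is a friend edge, then $uv \in E_H$ by the definition of $E_H$ (both endpoints are dense, the edge is a friend edge), so $u$ and $v$ lie in the same connected component of $H$, i.e.\ $u \in C_j$. Contrapositively, any dense neighbor $u$ of $v$ that lies outside $C_j$ cannot be a friend of $v$, so $u$ is a non-friend neighbor.

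Combining these two observations, the dense external neighbors of $v$ form a subset of the non-friend neighbors of $v$, and hence $\extd(v) \leq \epsilon \Delta$. There is no real obstacle here; the lemma is essentially immediate from the definitions, with the only subtle point being the observation that ``dense neighbor outside $C_j$'' forces ``non-friend,'' which follows because friend edges among dense vertices are precisely the edges that glue the components of $H$ together.
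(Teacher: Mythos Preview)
Your proof is correct and follows essentially the same approach as the paper: both argue that a dense friend of $v$ must lie in $C_j$, so external dense neighbors are non-friends, of which there are at most $\epsilon\Delta$ since $v$ is dense.
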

\begin{proof}
Let $v \in C_j$. As $v$ is dense, it has at least~$(1 - \epsilon) \Delta$ friends. So it has at most $\epsilon \Delta$ dense vertices which are not friends.  If any dense vertex~$w$ is a friend of~$v$, then by definition~$w \in C_j$. So $v$ has at most $\epsilon \Delta$ dense neighbors outside~$C_j$.
\end{proof}

\begin{definition}[Anti-degree]
For any dense vertex $v \in C_j$, we define the \emph{anti-degree} of $v$ to be $a(v) = |C_j \setminus N(v)|$.
\end{definition}

\begin{lemma}
\label{network2}
Suppose $\epsilon < 1/5$. Then any dense vertex~$v$ has $a(v) \leq 3 \epsilon \Delta$.
\end{lemma}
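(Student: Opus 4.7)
The plan is a double-counting argument on edges of $G$ between $N(v)$ and the set $S := (C_j \setminus N(v)) \setminus \{v\}$. Since $v \notin N(v)$ we have $v \in C_j \setminus N(v)$, so $a(v) = 1 + |S|$ and it suffices to show $|S| \leq 3\epsilon\Delta - 1$. I will count pairs $(u, w)$ with $u \in S$, $w \in N(v)$, and $uw \in E$ in two ways.

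From the $S$-side, every $u \in S$ lies in $C_j \setminus \{v\}$, so Lemma~\ref{lem:diameter} gives $|N(u) \cap N(v)| \geq (1-2\epsilon)\Delta$; summing yields at least $|S|(1-2\epsilon)\Delta$ such pairs. From the $N(v)$-side, fix $w \in N(v)$; since $S$ is disjoint from $N(v) \cup \{v\}$ while $v \in N(w)$, we obtain $|N(w) \cap S| \leq |N(w)| - 1 - |N(w) \cap N(v)| \leq \Delta - 1 - |N(w) \cap N(v)|$, and summing over $w$ gives the upper bound $|N(v)|(\Delta - 1) - \sum_{w \in N(v)} |N(w) \cap N(v)|$. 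The decisive step is to lower bound the last sum, and this is where I would use that $v$ is dense: $v$ has at least $(1-\epsilon)\Delta$ friends $f \in F(v)$, each satisfying $|N(f) \cap N(v)| \geq (1-\epsilon)\Delta$ by the very definition of a friend edge, so the sum is at least $(1-\epsilon)^2\Delta^2$. Combined with $|N(v)| \leq \Delta$, the upper bound reduces to $\Delta(\Delta-1) - (1-\epsilon)^2\Delta^2 = (2\epsilon - \epsilon^2)\Delta^2 - \Delta$.

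Equating the two bounds and rearranging yields $|S| \leq \bigl((2\epsilon - \epsilon^2)\Delta - 1\bigr)/(1 - 2\epsilon)$, so it remains to verify that this is at most $3\epsilon\Delta - 1$. Clearing denominators, the inequality collapses to $(5\epsilon - 1)\Delta \leq 2$, which holds trivially for $\epsilon < 1/5$ since its left-hand side is negative; note that this is the same threshold $\epsilon < 1/5$ that appears in Lemma~\ref{lem:diameter}, used here in precisely the same spirit. The only real obstacle I anticipate is identifying the $(1-\epsilon)^2\Delta^2$ lower bound on the internal edge mass of $N(v)$: the naive bound $|N(w) \cap N(v)| \geq 0$ is far too weak, and coaxing the correct quantity out of the friendship structure of the dense vertex $v$ is what ultimately produces the constant $3$. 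Once that ingredient is in place, the remainder is routine algebra.
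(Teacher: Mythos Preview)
Your proof is correct and follows essentially the same double-counting argument as the paper: both count incidences between $N(v)$ and $C_j \setminus N(v)$, obtaining the lower bound from Lemma~\ref{lem:diameter} and the upper bound from the friend structure of the dense vertex $v$, arriving at the same ratio $\epsilon(2-\epsilon)/(1-2\epsilon) \leq 3\epsilon$ for $\epsilon < 1/5$. Your version is marginally more careful in separating out $v$ itself and using $v \in N(w)$ to shave off a $-1$, but this is a cosmetic refinement of the same idea.
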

\begin{proof}
Let $v \in C_j$, and let~$R$ denote the number of length-$2$ paths of the form~$v, x, u$ where $x \in G$ and $u \in C_j \setminus N(v).$ We show this result by counting $R$ into ways different ways. First, observe that for any $u \in C_j \setminus N(v)$, there are precisely $|N(v) \cap N(u)|$ possibilities for the middle vertex~$x$. Lemma~\ref{lem:diameter} shows that:
$$
R = \sum_{u \in C_j \setminus N(v)} |N(v) \cap N(u)| \geq \sum_{u \in C_j \setminus N(v)} (1 - 2 \epsilon) \Delta = a(v) (1 - 2 \epsilon) \Delta
$$

We can also count~$R$ by summing over the middle vertex $x$:
\begin{align*}
R &= \sum_{x \in N(v)} |N(x) \cap ( C_j \setminus N(v))| \leq \sum_{x \in N(v)} |N(x) \setminus N(v)| \\
&= \sum_{x \in F(v)} |N(x) \setminus N(v)| +  \sum_{x \in N(v) \setminus F(v)} |N(x) \setminus N(v)| \\
&\leq \sum_{x \in F(v)} \epsilon \Delta +  \sum_{x \in N(v) \setminus F(v)} \Delta \\
&\leq \epsilon \Delta^2 + (1 - \epsilon)\Delta \qquad \text{(since $v$ is dense, $|F(v)| \geq (1 - \epsilon) \Delta$)} \\
&= \epsilon\Delta^2 (2 - \epsilon) 
\end{align*}
Combining these two inequalities gives $a(v) \leq \Delta \frac{\epsilon (2 - \epsilon)}{1 - 2 \epsilon}$; this is at most $3 \epsilon \Delta$ for $\epsilon < 1/5$.
\end{proof}

\begin{corollary}
For $\epsilon < 1/5$, all almost-cliques have size at most $(1 + 3 \epsilon) \Delta$.
\end{corollary}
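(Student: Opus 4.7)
The proof should be immediate from Lemma \ref{network2}. The plan is to fix an arbitrary almost-clique $C_j$, pick any vertex $v \in C_j$ (which is dense by construction of $H$), and partition $C_j$ into those members that are neighbors of $v$ and those that are not.

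Since $v \notin N(v)$, we have the disjoint decomposition $C_j = (C_j \cap N(v)) \sqcup (C_j \setminus N(v))$, so $|C_j| = |C_j \cap N(v)| + |C_j \setminus N(v)|$. The first term is trivially bounded by $|N(v)| \leq \Delta$, and the second term equals $a(v)$ by definition of the anti-degree. Applying Lemma \ref{network2} gives $a(v) \leq 3\epsilon\Delta$, which combines to yield $|C_j| \leq \Delta + 3\epsilon\Delta = (1+3\epsilon)\Delta$.

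There is essentially no obstacle here; the corollary is a direct bookkeeping consequence of Lemma \ref{network2}. The only subtlety worth stating explicitly is that we need some dense vertex $v$ to exist in $C_j$, which is automatic since $C_j$ is by definition a connected component of $H = (V^{\text{dense}}, E_H)$ and is therefore nonempty and consists of dense vertices.
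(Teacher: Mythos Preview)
Your proof is correct and follows exactly the same approach as the paper: pick $v \in C_j$, split $|C_j| = |C_j \cap N(v)| + |C_j \setminus N(v)| \leq |N(v)| + a(v) \leq \Delta + 3\epsilon\Delta$. The paper's proof is a one-liner; your additional remarks about disjointness and the existence of a dense $v$ in $C_j$ are valid but routine.
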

\begin{proof}
Let $v \in C_j$. Then $|C_j| = |C_j \setminus N(v)| + |C_j \cap N(v)| \leq a(v) + |N(v)| \leq (1 + 3 \epsilon) \Delta$.
\end{proof}

\section{Full algorithm outline}
\label{full-algorithm-sec}
We can now describe our complete algorithm for list-coloring graphs, whether sparse or dense. It will be convenient to have a ``blank'' color, which is available to all vertices and which we denote~$0$; we say that $\chi(v) = 0$ to indicate that $v$ has not (yet) chosen a color. Our parameters will be specified in terms of a constant $K$; we require $K$ to be sufficiently large, and will state certain conditions on its value later in the proofs. We assume that $\epsilon^4 \Delta \geq K \ln n$. If $\epsilon^{4} \Delta <  K \ln n$, then $\Delta < \text{polylog}(n)$, and so that the coloring procedure of~\cite{BEPS16} will already color the graph in $O(\log \Delta) +2^{O(\sqrt{\log \log n})} = 2^{O(\sqrt{\log \log n})}$ rounds.

We set the density parameter to be 
$$
\epsilon =  \frac{100^{-\sqrt{\ln \Delta}}}{100 K}.
$$ 

Note that our assumptions ensure that $\epsilon < 1/5$, which is needed for the results of Section~\ref{decomp-sec} and elsewhere. The bound on $\epsilon$ is used at a number of other places without further comment.

\begin{algorithm}[H]
\begin{algorithmic}[1]
\STATE Decompose $G$ into $V^{sparse}, C_1, \ldots, C_k$.
\STATE Execute {\it the initial coloring step}:
\begin{ALC@g}
\FOR{all vertices $v$}
\STATE $v$ selects a tentative color~$A(v)$ as follows:
$$
A(v) = 
\begin{cases}
0,              & \text{with probability } 99/100 \\
\text{color } c \in \pal(v) \text{ chosen uniformly at random},& \text{with probability } 1/100
\end{cases}
$$
\IF{no neighbor~$w \in N(v)$ has $A(w) = A(v)$ and $A(v) \neq 0$}
\STATE $v$ commits to permanent color $\chi(v) = A(v)$.
\ENDIF
\ENDFOR
\end{ALC@g}
\FOR{$i = 1, \dots, \lceil \sqrt{\ln \Delta} \rceil $}
\STATE Execute the \emph{dense coloring step} on the dense vertices (Algorithm \ref{algo:app1})
\ENDFOR
\STATE Run the algorithm of~\cite{elk15} to color the sparse vertices.
\STATE Run the algorithm of~\cite{BEPS16} to color the residual graph.
\end{algorithmic} 
\caption{The coloring algorithm}
\label{algo:app0}
\end{algorithm}

The key subroutine for the coloring algorithm is the dense coloring step. The $i^{\text{th}}$ dense coloring step is defined in terms of a parameter $\gamma_i \in [0,1]$, which we will specify shortly.
\begin{algorithm}[H]
\begin{algorithmic}[1]
\STATE Initialize $A(v) = 0$ for all $v \in V^{\text{dense}}$.
\STATE \textbf{For each}~$C_j$, elect a leader~$\ell_j$ to simulate the following process to color $C_j$:
\begin{ALC@g}
\STATE Generate a random permutation~$\pi_j$ of $C_j$
\FOR{$k= 1, \dots, L_j = \big \lceil |C_j| \gamma_i \big \rceil$}
\STATE Vertex $v = \pi_j(k)$ selects $A(v)$ uniformly at random from 
$$
\pal(v) \setminus \{ A(\pi_j(1)), \dots, A(\pi_j(k-1)) \}.
$$
\ENDFOR
\end{ALC@g}
\FOR{all $C_j$ and all $v \in C_j$}
\IF{no dense neighbor~$w \in N(v)$ has $A(w) = A(v)$, where $w \in C_{j'}$ and $\ID(\ell_{j'}) < \ID(\ell_j)$}
\STATE $v$ commits to permanent color $\chi(v) = A(v)$.
\ENDIF
\ENDFOR
\end{algorithmic}
\caption{The dense coloring step}
\label{algo:app1}
\end{algorithm}

We note that the decomposition of $G$ in Line 1 of Algorithm~\ref{algo:app0} remains fixed for the entire algorithm and all its dense coloring steps. Although in later steps vertices become colored and are removed from~$G$, we always define the decomposition in terms of the \emph{original} graph~$G$, not the residual graph. However, we abuse notation so that when we refer to a component~$C_j$ during an intermediate step, we mean the intersection of $C_j$ with the residual (uncolored) vertices. 

The algorithm is based on maintaining a partial coloring and a residual palette. That is, whenever a vertex is colored, we remove it from the graph as well as the vertex sets $V^{\text{sparse}}, C_1, \ldots ,C_k$; also, its selected color is removed from the palettes of its neighbors. An important parameter for such partial-coloring algorithms is the difference between the available colors, i.e. the palette size~$Q(v)$ and the uncolored neighbors~$d(v)$. We call this parameter \emph{(color) surplus} $S(v)$ of a vertex~$v$, defined by
$$
S(v) = Q(v) - d(v)
$$

The surplus $S(v)$ is initially at least one for every vertex, since all palettes are of size~$\Delta+1$. A vertex~$v$ can only lose a color from its palette if a neighbor becomes colored and drops out of the residual graph. Therefore, $S(v)$ can never decrease during the partial coloring.

For any vertex~$v$, let~$\pal_0(v), d_0(v), S_0(v)$ denote, respectively, the palette,  degree, and surplus of $v$ after the initial coloring step (note that the 0 in the subscript does not denote time 0, but the time immediately after the initial coloring step). We set $Q_0(v) = |\pal_0(v)|$. We will show in Theorem~\ref{initial-coloring-good-prop} that w.h.p. every sparse vertex~$v$ has $S_0(v) = \Omega(\epsilon^2 \Delta)$ and that every vertex~$v$ (sparse or dense) has~$Q_0(v) \geq \Delta /2$.

Then we turn our attention to the dense vertices and we will show that they can be colored efficiently. For a dense vertex $x \in C_j$, we let~$\extd_0(x)$ and $a_0(x)$ denote its external degree and anti-degree after the initial coloring step. Let $\pal_i(x), d_i(x), \extd_i(x), a_i(x)$, and $Q_i(x)$ denote the quantities at the end of the $i^{\text{th}}$ dense coloring step. As we color the graph, we maintain two key parameters, $D_i$ and~$Z_i$, which bound the external degree, anti-degree, and palette size for dense vertices after the $i^{\text{th}}$ dense coloring step. Specifically, we ensure the invariant that every dense vertex $v$ has
$$
\extd_i(v) \leq D_i, a_i(v) \leq D_i, Q_i(v) \geq Z_i
$$

We will then set our parameter $\gamma_i$ by
$$
\gamma_i = 1 - 2 \sqrt{D_{i-1}/Z_{i-1}};
$$
We will show in Corollary~\ref{gamma-cor} that $\gamma_i \in [0,1]$ as required.
 
At the end of the dense coloring steps, every sparse vertex~$v$ has
$$
S_{\lceil \sqrt{\ln \Delta} \rceil} (v) \geq S_0(v) = \Omega(\epsilon^2 \Delta)
$$

The algorithm of Elkin et al.~\cite{elk15} is designed for list-coloring where vertices have a large surplus, which indeed holds for the sparse vertices. Thus they can be colored in $O( \log(1/\epsilon^2)) + 2^{O(\sqrt{\log \log n})} = O( {\sqrt{\log \Delta}}) + 2^{O(\sqrt{\log \log n})}$ rounds. This removes the sparse vertices from the graph, leaving only the dense vertices behind.

After the sparse vertices are removed, Theorem~\ref{th1} shows that each remaining dense vertex is connected to $\Delta' = O(\log n) \cdot 2^{O(\sqrt{\log \Delta})}$ other vertices. The algorithm of~\cite{BEPS16} then takes $O(\log \Delta') + 2^{O(\sqrt{\log \log n})} = O(\sqrt{\log \Delta}) + 2^{O(\sqrt{\log \log n})}$ steps.

\section{The initial coloring step}
\label{sec:firstcoloring}
The initial coloring step is designed to achieve two important objectives:
\begin{enumerate}
\item[(A1)] For every vertex~$v$, we have $Q_0(v) \geq \Delta/2$.
\item[(A2)] For every sparse vertex~$v$, we have $S_0(v) =  \Omega(\epsilon^2 \Delta)$.
\end{enumerate}
Recall that $Q_0$ and $d_0$ are respectively the palette size and degree of vertex~$v$ after the initial coloring step, and the (color) surplus is $S_0(v) = Q_0(v) - d_0(v)$. Property (A1) is fairly straightforward, and most of our effort will be to show that (A2) holds w.h.p. for a given sparse vertex~$v$.

We briefly summarize the initial coloring step. With probability~$\alpha = \tfrac{1}{100}$, each vertex~$v$ chooses a tentative color~$A(v)$ uniformly at random from its palette.\footnote{We will write $\alpha$ instead of directly writing $1/100$.} It discards the tentative color if any neighbor also chooses the same tentative color; in this case we say that $v$ is \emph{de-colored}. We let~$\chi(v)$ denote the permanent color selected by $v$ \emph{after the initial coloring step.} We say that $A(v)= 0$ if vertex~$v$ chose not to select a color initially and we say~$\chi(v)=0$ if $v$ is uncolored (either because it did not select a tentative color, or it became de-colored).

For each color~$c$ and each vertex $v$, we define $N_c(v)$ to be the set of neighbors of $v$ whose palette contains~$c$; that is, $$
N_c(v) = \{w \in N(v) \mid c \in \pal(w) \}
$$

Let us now fix some sparse vertex~$v$, and show that after the initial coloring step~$v$ has a large surplus. We define a color~$c$ to be \emph{good} if
$$
\sum_{w \in N(v)} [\chi(w) = c] \geq 1 + [c \in \pal(v)]
$$
(Here and in the remainder of the paper, we use the Iverson notation so that for any predicate $\mathcal P$,  $[ \mathcal P ]$ is equal to 1 if $\mathcal P$ is true and zero otherwise.)

Let~$J$ denote the set of colors that are good for~$v$. 
\begin{proposition}
\label{sparse-prop2}
The following bound holds with probability one:
$$
S_0(v) \geq |J|
$$
\end{proposition}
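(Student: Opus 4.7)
The plan is to set up a bookkeeping argument that tracks, for each color $c$, how many of $v$'s neighbors commit permanently to $c$ in the initial coloring step, and relates this to the change in palette size and in degree. Concretely, for each nonzero color $c$, let
$$
n_c = \bigl| \{ w \in N(v) : \chi(w) = c \} \bigr|.
$$
Then $c$ is good (as defined immediately before the proposition) exactly when $n_c \geq 1 + [c \in \pal(v)]$, and $|J| = \sum_{c \neq 0} [n_c \geq 1 + [c \in \pal(v)]]$.

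Next I would express $Q_0(v)$ and $d_0(v)$ in terms of $|\pal(v)|$, $d(v)$, and the $n_c$'s. A color is removed from $\pal(v)$ precisely when at least one neighbor commits to it, so
$$
Q_0(v) = |\pal(v)| - \sum_{c \in \pal(v)} [n_c \geq 1],
$$
while each colored neighbor is deleted from the residual graph, giving
$$
d_0(v) = d(v) - \sum_{c \neq 0} n_c.
$$
Subtracting and using the initial surplus $S(v) = |\pal(v)| - d(v) \geq 1$ (since $|\pal(v)| \geq \Delta+1$ and $d(v) \leq \Delta$), one obtains
$$
S_0(v) = S(v) + \sum_{c \neq 0} \Bigl( n_c - [c \in \pal(v)] \, [n_c \geq 1] \Bigr).
$$

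The final step is the per-color inequality. If $c$ is good and $c \in \pal(v)$, then $n_c \geq 2$, so the summand equals $n_c - 1 \geq 1$; if $c$ is good and $c \notin \pal(v)$, then $n_c \geq 1$ and the summand equals $n_c \geq 1$. For any other color, a small case check ($n_c = 0$, or $n_c = 1$ with $c \in \pal(v)$, or $c \notin \pal(v)$ so the indicator vanishes) shows the summand is nonnegative. Summing gives $S_0(v) \geq S(v) + |J| \geq |J|$, which yields the claim with probability one.

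The only genuine subtlety is the mismatch between palette decrements and degree decrements: multiple neighbors can commit to the same color $c$, in which case $\pal(v)$ loses $c$ just once (or zero times, if $c \notin \pal(v)$) while $d(v)$ drops by all $n_c$ such neighbors. This asymmetry is exactly the source of the gain in $S_0(v)$ and is precisely what the definition of ``good'' is designed to isolate, so the bookkeeping above turns the counting into the desired bound without further probabilistic input.
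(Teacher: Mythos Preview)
Your proof is correct and follows essentially the same approach as the paper. The paper's argument is a terser version of your per-color accounting: it just notes that a good color $c \notin \pal(v)$ yields a neighbor whose removal drops $d(v)$ by one without touching $\pal(v)$, while a good color $c \in \pal(v)$ yields two neighbors whose removal drops $d(v)$ by two and $\pal(v)$ by only one. Your explicit bookkeeping with the $n_c$'s makes the same point and additionally verifies that non-good colors contribute nonnegatively (which the paper takes for granted from its earlier remark that surplus never decreases).
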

\begin{proof}
Suppose $c \notin \pal(v)$ and there is some neighbor~$w \in N(v)$ with~$\chi(w) = c$. When we remove~$w$ from the residual graph, $\pal(v)$ does not change while~$d(v)$ decreases by one. Suppose $c \in \pal(v)$, and there are two neighbors $w_1, w_2 \in N(v)$ with $\chi(w_1) = \chi(w_2) = c$. When we remove~$w_1$ and~$w_2$ from the residual graph, $\pal(v)$ decreases by one while~$d(v)$ decreases by two.
\end{proof}

In light of Proposition~\ref{sparse-prop2}, it will suffice to show that $|J|$ is large with high probability. We do so in two stages: first, we show that $\bE[ |J| ]$ is large, and second we show that $|J|$ is concentrated around its mean.
\begin{lemma} \label{sparse-prop3a}
For $c \notin \pal(v)$, we have $P(c \in J) = \Omega \bigl( |N_c(v)|/\Delta \bigr)$.
\end{lemma}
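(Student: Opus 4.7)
The plan is to bound $P(c \in J)$ via Bonferroni's inequality applied to the union $\bigcup_{w \in N_c(v)} F_w$, where $F_w$ denotes the event $\chi(w) = c$. Since $c \notin \pal(v)$ the color $c$ is good for $v$ iff at least one neighbor commits to $c$, and any such neighbor must have $c$ in its palette, so this union is exactly the event $c \in J$. I will use
\[
P(c \in J) \;\geq\; \sum_{w \in N_c(v)} P(F_w) \;-\; \sum_{\{w,w'\} \subseteq N_c(v)} P(F_w \cap F_{w'}),
\]
and show that the first sum is $\Omega(|N_c(v)|/\Delta)$ while the pairwise correction is a lower-order term.

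For each fixed $w \in N_c(v)$, since $c \in \pal(w)$ we have $P(A(w) = c) = \alpha/|\pal(w)|$, and $F_w$ further requires $A(u) \neq c$ for every $u \in N(w)$. Because the tentative colors are independent across vertices,
\[
P(F_w) \;=\; \frac{\alpha}{|\pal(w)|} \prod_{u \in N(w)} \Bigl(1 - \frac{\alpha\,[c \in \pal(u)]}{|\pal(u)|}\Bigr) \;\geq\; \frac{\alpha}{|\pal(w)|}\Bigl(1 - \frac{\alpha}{\Delta + 1}\Bigr)^{\Delta}.
\]
The trailing product is bounded below by a positive constant (essentially $e^{-\alpha}$), and palette sizes are $\Theta(\Delta)$, so $P(F_w) = \Omega(1/\Delta)$; summing yields $\sum_{w \in N_c(v)} P(F_w) = \Omega(|N_c(v)|/\Delta)$.

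For the correction term I would use two observations. First, if $w, w' \in N_c(v)$ are adjacent in $G$ then $F_w$ and $F_{w'}$ are disjoint, since both events together would require $A(w) = A(w') = c$, causing both vertices to be de-colored. Second, for non-adjacent $w, w'$, independence of $A(w)$ and $A(w')$ gives $P(F_w \cap F_{w'}) \leq P(A(w) = c) P(A(w') = c) = O(1/\Delta^2)$, so summing over the $\binom{|N_c(v)|}{2}$ pairs contributes at most $O\bigl(|N_c(v)|^2/\Delta^2\bigr) = O\bigl(\alpha^2 |N_c(v)|/\Delta\bigr)$ using $|N_c(v)| \leq \Delta$. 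Since $\alpha = 1/100$ is a sufficiently small constant, this correction is at most half of the first-order sum, completing the proof. The one place where care is needed is in ensuring that palette sizes are actually $\Theta(\Delta)$ so that $P(A(w) = c) = \Omega(1/\Delta)$; in the list-coloring setting this is handled by (implicitly) truncating each palette to exactly $\Delta + 1$ colors before the initial coloring step.
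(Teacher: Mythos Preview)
Your argument is correct, but it takes a different route from the paper. The paper avoids inclusion--exclusion entirely: for each $x\in N_c(v)$ it defines an auxiliary event $B_x$ requiring $A(x)=c$ and $A(z)\neq c$ for all $z\in N(v)\cup N(x)\setminus\{x\}$. By throwing all of $N(v)$ into the ``blocking'' set, the events $B_x$ become pairwise disjoint, so $P(c\in J)\geq\sum_x P(B_x)$ with no correction term needed; each $P(B_x)$ is then lower-bounded by $\tfrac{\alpha}{\Delta+1}(1-\tfrac{\alpha}{\Delta+1})^{2\Delta-1}=\Omega(1/\Delta)$.

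Your Bonferroni approach works with the more natural events $F_w=\{\chi(w)=c\}$ and pays for the overlap via the pairwise term, which you correctly show is $O(\alpha^2|N_c(v)|/\Delta)$ and hence is dominated by the main term when $\alpha=1/100$. The tradeoff: the paper's disjointification trick sidesteps any careful constant-tracking, while your argument is more direct and does not require spotting the auxiliary events. Both yield the same bound; your final remark about palettes being exactly $\Delta+1$ at this stage is consistent with what the paper assumes throughout Section~5.
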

\begin{proof}
For each $x \in N_c(v)$, let us define $B_x$ to be the event that (i) $A(x) = c$ and (ii) $A(z) \neq c$ for all $z \in N(v) \cup N(x) \setminus \{x \}$. 

If event $B_x$ occurs, then we will have $\chi(x) = c$, and so $c$ will go into $J$. Also, condition (ii) ensures that the events $B_x$ are all mutually exclusive. Thus
\begin{align*}
P(c \in J) &\geq P( \bigvee_{x \in N_c(v)} B_x ) = \sum_{x \in N_c(v)} P(B_x)  \\
&= \sum_{x \in N_c(v)} \frac{\alpha}{\Delta+1} \Bigl( 1 - \frac{\alpha}{\Delta+1} \Bigr)^{|N_c(x) \cup N_c(v) \setminus \{ x \} |} \geq \sum_{x \in N_c(v)} \frac{\alpha}{\Delta+1} \Bigl( 1 - \frac{\alpha}{\Delta+1} \Bigr)^{2 \Delta - 1} \\
&\geq \frac{|N_c(v)| \alpha (1 - \alpha)^2}{\Delta} = \Omega\bigl( |N_c(v)| / \Delta \bigr) 
\end{align*}
\end{proof}

\begin{lemma}
\label{sparse-prop3}
Suppose that $\epsilon \Delta \geq 3$. If $c \in \pal(v)$ and $|N_c(v)| \geq (1 - \epsilon/2) \Delta$, then $P( c \in J ) = \Omega(\epsilon^2)$.
\end{lemma}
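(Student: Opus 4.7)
The plan is to adapt the argument of Lemma~\ref{sparse-prop3a}: since $c \in \pal(v)$, goodness now requires \emph{two} neighbors of $v$ to commit to color $c$. For each unordered pair $\{x,y\}$ of non-adjacent vertices in $N_c(v)$, I would define
\[
B_{x,y} = \bigl\{A(x) = A(y) = c\bigr\} \cap \bigl\{A(z) \neq c \text{ for every } z \in (N(v) \cup N(x) \cup N(y)) \setminus \{x,y\}\bigr\}.
\]
Two preliminary checks follow. First, the events are pairwise disjoint across distinct unordered pairs: for $\{x_1,y_1\} \neq \{x_2,y_2\}$, some vertex of one pair lies in $N(v)$ but outside the other pair, and the two events would simultaneously force it to pick $c$ and not pick $c$. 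Second, $B_{x,y}$ forces $\chi(x) = \chi(y) = c$, because $x$ and $y$ are non-adjacent (so they do not block each other) and no other neighbor of $x$ or $y$ chose $c$. Hence $c \in J$ whenever some $B_{x,y}$ occurs.

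Each $P(B_{x,y})$ is easy to bound by mimicking Lemma~\ref{sparse-prop3a}: writing the event as two tentative-color picks together with at most $|N(v) \cup N(x) \cup N(y)| - 2 \leq 3\Delta$ forced non-picks of $c$, we get
\[
P(B_{x,y}) \geq \bigl(\tfrac{\alpha}{\Delta+1}\bigr)^{\!2} \bigl(1 - \tfrac{\alpha}{\Delta+1}\bigr)^{3\Delta} = \Omega(1/\Delta^2).
\]

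The main obstacle is producing $\Omega(\epsilon^2\Delta^2)$ non-adjacent pairs inside $N_c(v)$; this is where the standing assumption that $v$ is sparse enters (without it the lemma fails, e.g.\ when $N(v)$ forms a clique). From $|N_c(v)| \geq (1-\epsilon/2)\Delta$ we also have $|N(v)| \geq (1-\epsilon/2)\Delta$ and $|N(v) \setminus N_c(v)| = |N(v)| - |N_c(v)|$. Since $v$ has fewer than $(1-\epsilon)\Delta$ friends, the number of non-friends of $v$ contained in $N_c(v)$ is at least
\[
\bigl(|N(v)| - (1-\epsilon)\Delta\bigr) - \bigl(|N(v)| - |N_c(v)|\bigr) \;=\; |N_c(v)| - (1-\epsilon)\Delta \;\geq\; \epsilon\Delta/2.
\]
For each such non-friend $x \in N_c(v)$, the definition gives $|N(v) \cap N(x)| < (1-\epsilon)\Delta$, so $x$ has more than $|N(v)| - 1 - (1-\epsilon)\Delta$ non-neighbors in $N(v) \setminus \{x\}$. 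Subtracting $|N(v) \setminus N_c(v)|$ again shows that at least $|N_c(v)| - 1 - (1-\epsilon)\Delta \geq \epsilon\Delta/2 - 1$ of these lie in $N_c(v) \setminus \{x\}$, and $\epsilon\Delta \geq 3$ upgrades this to $\epsilon\Delta/6$. Summing this count over the $\geq \epsilon\Delta/2$ non-friends $x \in N_c(v)$ and halving to undo double-counting produces at least $\Omega(\epsilon^2\Delta^2)$ non-adjacent pairs in $N_c(v)$.

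Combining everything,
\[
P(c \in J) \;\geq\; \sum_{\{x,y\}} P(B_{x,y}) \;\geq\; \Omega(\epsilon^2\Delta^2)\cdot \Omega(1/\Delta^2) \;=\; \Omega(\epsilon^2).
\]
The hardest step is the counting one, because the sparsity margin and the palette-restriction loss are both of order $\epsilon\Delta/2$; one must line up the two $\epsilon\Delta/2$ terms so they cancel on the correct side and leave $\Omega(\epsilon\Delta)$ non-friends (and later $\Omega(\epsilon\Delta)$ non-neighbors per non-friend) \emph{inside} $N_c(v)$ rather than merely inside $N(v)$.
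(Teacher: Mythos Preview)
Your proposal is correct and follows essentially the same approach as the paper: define mutually exclusive events $B_{x,y}$ over non-adjacent pairs in $N_c(v)$, bound each by $\Omega(1/\Delta^2)$, and use sparsity of $v$ together with $|N_c(v)| \geq (1-\epsilon/2)\Delta$ to produce $\Omega(\epsilon^2\Delta^2)$ such pairs via the set of non-friends of $v$ inside $N_c(v)$. The paper's counting is phrased with ordered pairs $(x,y)$, $x<y$, and keeps the bound $\epsilon\Delta/2 - 1$ rather than weakening to $\epsilon\Delta/6$, but the arguments are otherwise identical.
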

\begin{proof}
Define $U$ to be the set of ordered pairs $(x,y)$ with $x, y \in N_c(v), x < y$ and $xy \notin E$. For every pair of vertices $(x,y) \in U$, let us define $B_{x,y}$ to be the event that (i) $A(x) = A(y) = c$ and (ii) $A(z) \neq c$ for all $z \in N(v) \cup N(x) \cup N(y) \setminus \{x, y\}$. 

If event~$B_{x,y}$ occurs, then (as $x,y$ are not neighbors) we will have $\chi(x) = \chi(y) = c$, and so $c$ will go into~$J$. Also, note that condition (ii) ensures that the events $B_{x,y}$ are mutually exclusive. Thus,
\begin{align*}
P( c \in J) &\geq P( \bigvee_{(x,y) \in U} B_{x,y}) = \sum_{(x,y) \in U} P(B_{x,y}) \\
&= \sum_{(x,y) \in U} \frac{\alpha^2}{(\Delta+1)^2} \bigl( 1 - \frac{\alpha}{\Delta+1} \bigr)^{|N_c(v) \cup N_c(x) \cup N_c(y) \setminus \{x,y \} |} \\
&\geq \sum_{(x,y) \in U} \frac{\alpha^2}{(\Delta+1)^2} \bigl( 1 - \frac{\alpha}{\Delta+1} \bigr)^{3 \Delta - 2} \geq \sum_{(x,y) \in U} \frac{\alpha^2}{(\Delta+1)^2} (1 - \alpha)^3 = \Omega(\frac{|U|}{\Delta^2})
\end{align*}

Next, we estimate~$|U|$. Consider the set $W = \{w \in N_c(v) \mid wv \notin F \}$. By definition of sparsity, $v$ has at most $(1 - \epsilon) \Delta$ friends. Thus, $|W| \geq |N_c(v)| - (1 - \epsilon) \Delta \geq \epsilon \Delta/2$.

Since each $w \in W$ is not a friend of~$v$, we have $|N(w) \cap N(v)| < (1 - \epsilon) \Delta$.  So $|N_c(v) \setminus ( N(w) \cup \{w \})| \geq |N_c(v)| - 1 - |N(v) \cap N(w)| > \epsilon \Delta/2 - 1$. So we count~$U$ as follows:
\begin{align*}
|U| &= \sum_{\substack{x \in N_c(v) \\ y \in N_c(v) \setminus N(x) \\ y < x}} \negthickspace \negthickspace \negthickspace 1 
= \sum_{\substack{x \in N_c(v) \\ y \in N_c(v) \setminus (N(x) \cup \{ x \})}} \negthickspace \negthickspace \negthickspace \negthickspace \negthickspace \tfrac{1}{2} 
\geq \sum_{\substack{x \in W \\ y \in N_c(v) \setminus (N(x) \cup \{x \})}} \negthickspace \negthickspace  \negthickspace \negthickspace \tfrac{1}{2} \\
&\geq \tfrac{1}{2} \sum_{x \in W} ( \epsilon \Delta/2 - 1 ) \geq \tfrac{1}{2} \cdot (\epsilon/2) \Delta \cdot (\epsilon \Delta/2 - 1) \\
&= \Omega(\epsilon^2 \Delta^2) \qquad \text{(as $\epsilon \Delta \geq 3$ and $\epsilon < 1/5$)}
\end{align*}

This shows that $P(c \in J) = \Omega(\frac{\epsilon^2 \Delta^2}{\Delta^2}) = \Omega(\epsilon^2)$.
\end{proof}

\begin{lemma}
\label{sparse-prop4}
Suppose that $\epsilon \Delta \geq 3$.  If $d(v) \geq (1 - \epsilon/4) \Delta$, then $\bE [ |J| ] = \Omega(\epsilon^2 \Delta)$.
\end{lemma}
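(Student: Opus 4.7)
The plan is to bound $\bE[|J|]$ by summing $P(c \in J)$ separately over the two types of colors covered by the preceding lemmas, and then using a double-counting argument to show that at least one of the two partial sums is $\Omega(\epsilon^2 \Delta)$. Writing $E := \sum_{c \notin \pal(v)} |N_c(v)|$ and $T := \{c \in \pal(v) : |N_c(v)| \geq (1-\epsilon/2)\Delta\}$, Lemmas~\ref{sparse-prop3a} and~\ref{sparse-prop3} together give
\[
\bE[|J|] \;\geq\; \sum_{c \notin \pal(v)} P(c \in J) + \sum_{c \in T} P(c \in J) \;=\; \Omega(E/\Delta) \;+\; \Omega(\epsilon^2 |T|).
\]

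The link between $E$ and $|T|$ comes from the identity
\[
E + \sum_{c \in \pal(v)}|N_c(v)| \;=\; \sum_{w \in N(v)} |\pal(w)| \;\geq\; (1-\epsilon/4)\Delta(\Delta+1),
\]
combined with the crude upper bound $\sum_{c \in \pal(v)}|N_c(v)| \leq |T|\Delta + (|\pal(v)|-|T|)(1-\epsilon/2)\Delta$ (using $|N_c(v)| \leq \Delta$ on $T$, $|N_c(v)| < (1-\epsilon/2)\Delta$ off $T$, and the palette-size convention $|\pal(v)| = \Delta+1$). After rearrangement this yields an inequality roughly of the shape $|T| \geq (\Delta+1)/2 - O(E/(\epsilon \Delta))$. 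A simple case split then completes the proof: if $E = \Omega(\epsilon^2 \Delta^2)$, the first summand of the lower bound on $\bE[|J|]$ alone already contributes $\Omega(\epsilon^2 \Delta)$; otherwise the error term in the bound on $|T|$ is at most $\epsilon\Delta/4$, forcing $|T| = \Omega(\Delta)$, so the second summand contributes $\Omega(\epsilon^2 \Delta)$.

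The main obstacle I expect is the upper bound step, because it is where one needs $|\pal(v)| = \Delta+1$ to bound the number of $c \in \pal(v)\setminus T$ (equivalently, the number of colors in $\pal(v)$ with $|N_c(v)| \geq 1$) by $\Delta+1 - |T|$; if strictly larger palettes were allowed, $|T|$ could be small even while $E$ is small, and one would have to argue more carefully at the level of pairs $(w,c)$ with $c \in \pal(w) \cap (\pal(v) \setminus T)$. The other subtlety is calibrating constants so that the ``gap'' $(1-\epsilon/4) - (1-\epsilon/2) = \epsilon/4$ in the double-counting inequality translates cleanly into the desired $\Omega(\epsilon^2 \Delta)$; this is routine bookkeeping once the main balance inequality is set up, and the hypothesis $\epsilon \Delta \geq 3$ is used only to absorb $O(1)$ terms into the $\Omega$-notation.
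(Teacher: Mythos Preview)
Your proposal is correct and essentially identical to the paper's proof: the paper partitions $\mathcal C$ into $B_1 = \{c \notin \pal(v)\}$, $B_2 = T$, and $B_3 = \pal(v)\setminus T$, uses the same double-counting identity $(\Delta+1)d(v) = \sum_c |N_c(v)|$ together with the same upper bound on the $B_2$ and $B_3$ contributions, and then case-splits on whether $|B_2| \geq \Delta/4$ (rather than on the size of $E$), which is the dual of your split via the very inequality you derived. One small correction: the hypothesis $\epsilon\Delta \geq 3$ is needed here not to absorb $O(1)$ terms but as a precondition for invoking Lemma~\ref{sparse-prop3}.
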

\begin{proof}
Let us partition the set of colors~$\mathcal C$ into three disjoint sets:
\begin{align*}
B_1 &= \{ c \in \mathcal C \mid c \notin \pal(v) \} \\
B_2 &= \{ c \in \mathcal C \mid c \in \pal(v), |N_c(v)| \geq (1 - \epsilon/2) \Delta \} \\
B_3 &= \{ c \in \mathcal C \mid c \in \pal(v), |N_c(v)| < (1 - \epsilon/2) \Delta \}
\end{align*}

If $|B_2| \geq \Delta/4$, then by Proposition~\ref{sparse-prop3} we immediately have 
$$
\bE[ |J| ] \geq \sum_{c \in B_2} P(c \in J) \geq \Delta/4 \cdot \Omega(\epsilon^2)
$$
and we are done. So let us suppose that $|B_2| < \Delta/4$. 

Recall that $|\pal(w)| = \Delta+1$ for every vertex $w$. Thus, for each $w \in N(v)$, there are exactly $\Delta+1$ values of $c$ with~$w \in N_c(v)$. By double counting, 
\begin{align*}
(\Delta+1) d(v) &= \sum_{c \in \mathcal C} |N_c(v)| = \sum_{c \in B_1} |N_c(v)| + \sum_{c \in B_2} |N_c(v)| + \sum_{c \in B_3} |N_c(v)| \\
&\leq \sum_{c \in B_1 } |N_c(v)| + |B_2| \Delta + |B_3| (1 - \epsilon/2) \Delta
\end{align*}

Rearranging, and using the fact that $|\pal(v)| = |B_2| + |B_3| = \Delta+1$, gives:
\begin{align*}
\sum_{c \in B_1} |N_c(v)| &\geq (\Delta + 1) d(v) - |B_2|\Delta - |B_3|(1-\epsilon/2)\Delta \\
&\geq (\Delta + 1)(1-\epsilon/4)\Delta - |B_2|\Delta - (\Delta + 1 - |B_2|)(1-\epsilon/2)\Delta \\
&= \tfrac{\epsilon \Delta}{4} ( \Delta + 1 - 2 |B_2|) \geq \epsilon \Delta^2/16
\end{align*}
where the last inequality holds since $|B_2| < \Delta/4$.

Lemma~\ref{sparse-prop3a} then gives:
\begin{align*}
\bE[ |J| ] \geq \sum_{c \in \mathcal C \setminus \pal(v)} P(c \in J) \geq \sum_{c \in B_1} \Omega( |N_c(v)|/\Delta ) = \Omega( \epsilon^2 \Delta ).
\end{align*}
\end{proof}

\begin{lemma}
\label{sparse-conc-prop}
With probability at least $1 - e^{-\Omega(\epsilon^4 \Delta)}$, we have $S_0(v) = \Omega(\epsilon^2 \Delta)$.
\end{lemma}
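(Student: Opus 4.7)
I would split the argument by the size of $d(v)$. If $d(v) < (1-\epsilon/4)\Delta$, then the initial surplus $|\pal(v)| - d(v) \geq \Delta + 1 - d(v) > \epsilon \Delta/4$; since the surplus cannot decrease during the partial coloring, $S_0(v) \geq \epsilon \Delta/4 = \Omega(\epsilon^2 \Delta)$ holds with probability one, and the lemma is trivial in this regime. In the remaining main case $d(v) \geq (1-\epsilon/4)\Delta$, Lemma~\ref{sparse-prop4} gives $\mu := \bE[|J|] = \Omega(\epsilon^2 \Delta)$, and Proposition~\ref{sparse-prop2} reduces the claim to showing $\Pr[|J| < \mu/2] \leq \exp(-\Omega(\epsilon^4 \Delta))$.

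To prove this concentration I would apply Talagrand's inequality to $X = |J|$, regarded as a function of the independent choices $\{A(w) : w \in N^{\leq 2}(v)\}$. Two properties must be established. First, $X$ is $O(1)$-Lipschitz: resampling a single $A(w)$ from value $c$ to $c'$ only alters $\chi$ for vertices in $\{w\} \cup N(w)$, and the only colors whose count at $v$ can shift are $c$ and $c'$, so at most two colors flip in or out of $J$ and hence $|X-X'|\leq 2$. Second, $X$ is certifiable: if $X \geq s$, one can select $s$ witness colors, each certified by fixing $A$ on a witness neighbor $x_c \in N(v)$ (two witnesses for a type-(ii) color) together with its one-neighborhood forcing $\chi(x_c) = c$, giving a certificate of size $f(s) = O(s \Delta)$.

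The main obstacle will be extracting the sharp exponent $\Omega(\epsilon^4 \Delta)$, since a direct application of Talagrand with $f(\mu) = O(\epsilon^2 \Delta^2)$ and Lipschitz constant $2$ yields only $\exp(-\Omega(\epsilon^2))$. To close this gap I expect to either (a) switch to a Freedman-type Doob martingale that exposes $A$-values along a carefully chosen filtration---first on $N(v)$, where bounded differences over $d(v) \leq \Delta$ steps of size $O(1)$ already produce the target $\exp(-\Omega(\mu^2/\Delta)) = \exp(-\Omega(\epsilon^4 \Delta))$ via Azuma, and then control the residual randomness from second-hop $A$-values using that these coordinates equal $0$ with probability $99/100$ and so contribute only $O(1/\Delta)$ conditional variance per step; or (b) use a sharper form of Talagrand's inequality (in the Molloy--Reed style) that exploits the heavy overlap of witness neighborhoods inside $N^{\leq 2}(v)$ so that the effective certificate size is $O(\mu + \Delta)$ rather than $O(\mu \Delta)$. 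Combining the resulting concentration bound with Proposition~\ref{sparse-prop2} and Lemma~\ref{sparse-prop4} completes the proof.
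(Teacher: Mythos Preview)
Your option~(a) is essentially the paper's route, and the overall scaffolding (case split on $d(v)$, reduce to concentration of $|J|$, split first-hop from second-hop) is correct. The paper, however, executes the two-phase idea more simply than a Freedman/Doob martingale: it introduces an auxiliary quantity $J'$, defined as what $J$ would be if every distance-$2$ vertex $u$ had $A(u)=0$. Since second-hop vertices can only de-color vertices in $N(v)$, one has $J\subseteq J'$, and in particular $\bE[|J'|]\geq\bE[|J|]=\Omega(\epsilon^2\Delta)$. Now $|J'|$ is a function of only the $\leq\Delta+1$ independent coordinates $\{A(w):w\in\{v\}\cup N(v)\}$, each with Lipschitz constant~$2$ (exactly your argument), so McDiarmid's bounded-differences inequality directly yields $\Pr[|J'|<\tfrac12\bE[|J'|]]\leq\exp(-\Omega(\epsilon^4\Delta))$. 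For the second phase one conditions on all first-hop values, picks for each $c\in J'$ one or two witness vertices in $N(v)$, and lets $R$ count the ``disqualifications'' caused by second-hop vertices selecting a witness's color; then $|J|\geq|J'|-R$, and a Chernoff bound (since the disqualification events are negatively correlated and $\bE[R]\leq 2\alpha|J'|\leq|J'|/4$) gives $\Pr[R\geq|J'|/2]\leq\exp(-\Omega(\epsilon^2\Delta))$.

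This avoids the martingale bookkeeping your option~(a) alludes to: rather than tracking conditional variances of a Doob martingale over $\Theta(\Delta^2)$ second-hop exposures, the paper isolates a concrete random variable $J'$ that depends on first-hop only and handles the second hop by an elementary counting/Chernoff argument. Your option~(b) is speculative and the paper does not pursue it; I would not rely on a ``certificate overlap'' heuristic to rescue Talagrand here without a concrete inequality in hand.
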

\begin{proof}
If $d(v) \leq 1 - \epsilon \Delta/4$ then~$S_0(v) \geq \epsilon \Delta/4 = \Omega(\epsilon^2 \Delta)$ with certainty. Also, if $\epsilon \Delta < 3$, then $S_0(v) \geq 1 = \Omega(\epsilon^2 \Delta)$ with certainty. So let us assume that $d(v) > 1 - \epsilon \Delta/4$ and $\epsilon \Delta > 3$. We will now show~$|J| = \Omega(\epsilon^2 \Delta)$ with probability at least~$1-e^{-\Omega(\epsilon^4\Delta)}$. By Proposition~\ref{sparse-prop2} this will establish the result.

Let $W = \{v\} \cup N(v)$ and let~$U$ denote the set of vertices with distance 2 to~$v$. Let us define $J'$ to be the set of colors which would be good \emph{if~$A(x) = 0$ for all~$x \in U$}. Note that vertices in $U$ can de-color vertices in $W$. So vertices of $U$ can only remove colors from $J$ and hence  $J \subseteq J'$.

Since $J \subseteq J'$, Proposition~\ref{sparse-prop4} shows that 
$$
\bE[|J'|] \geq \phi \epsilon^2 \Delta
$$
for some constant~$\phi > 0$. 

Note that for~$w \in W$, modifying the value of~$A(w)$ can only change~$|J'|$ by at most 2 (the value of~$A(w)$ only affects whether~$A(w) \in J'$). Hence, by the Bounded Differences Inequality,
$$
P \Bigl(  |J'| < \frac{\phi \epsilon^2 \Delta}{2} \Bigr) \leq \exp\left(-\frac{ (\phi \epsilon^2 \Delta)^2}{ 2 \cdot \sum_{i \in \{v \} \cup N(v)} 2^2 }\right) \leq \exp(-\Omega(\epsilon^4 \Delta))
$$

Now, let us condition on the full set of values $A(w)$ for $w \in W$, and also condition on the event $|J'| \geq \tfrac{\phi}{2} \epsilon^2 \Delta$ (which only depends on the values of $A(w)$ for $w \in W$). Let~$j = |J'|$. The values of $A(u)$ for $u \in U$ are still independent and uniform, and each such vertex has the possibility of de-coloring a vertex in~$W$.

For each $c \in J' \setminus  \pal(v)$, let~$y_c$ be any vertex in $N(v)$ with $A(y_c) = c$ and not de-colored by any vertices in~$W$. Similarly, for $c \in J' \cap \pal(v)$, let~$y_c, y'_c$ be any two vertices in $N(v)$ with $A(y_c) = A(y'_c) = c$ and not de-colored by any vertices in $W$ (so $y_c$ and $y'_c$ cannot be neighbors). Such colors will go into $J$ unless a vertex in $N(y_c)$ selects $c$ (respectively, in $N(y_c) \cup N(y'_c)$ selects color~$c$.)

If a vertex~$u \in U$ selects $A(u) = c$ for such a color~$c \in J'$, causing color~$c$ to not appear in~$J$, we say that $u$ \emph{disqualifies} color~$c$. Define the random variable~$R$ by
$$
R = \sum_{\substack{c \in J' \\ u \in (N(y_c) \cup N(y'_c))\cap U}} \negthickspace \negthickspace \negthickspace \negthickspace \negthickspace \negthickspace [ \text{$u$ disqualifies color $c$}]
$$

Observe that $|J| \geq |J'| - R = j - R$, and so the Lemma will follow by showing that $R < j/2$.

Each vertex~$u \in U$ disqualifies any given color~$c$ with probability at most~$\frac{\alpha}{\Delta+1}$. Furthermore there are at most $2 \Delta$ vertices $u \in U$ that can disqualify any given color $c \in J'$. Hence,
\begin{align*}
\bE[R] &\leq \sum_{\substack{c \in J' \\ u \in (N(y_c) \cup N(y'_c))\cap U}} \negthickspace \negthickspace \negthickspace \negthickspace \negthickspace \negthickspace P( \text{$u$ disqualifies color $c$}) \leq j \cdot 2 \Delta \cdot \frac{\alpha}{\Delta+1} \\
& \leq j/4 \qquad \text{as $\alpha = 1/100$}
\end{align*}

All such disqualification events are negatively correlated. Using the fact that $j \geq \frac{\phi \epsilon^2 \Delta}{2}$, we apply Chernoff's bound to obtain
$$
P( R \geq j/2)\leq e^{-(j/4) \cdot 4 / 3} \leq \exp(-\frac{\phi \epsilon^2 \Delta}{6}) \leq \exp(-\Omega(\epsilon^2 \Delta))
$$

Overall, we have shown that $|J| = \Omega(\epsilon^2 \Delta)$ with probability $1 - \exp(-\Omega(\epsilon^4 \Delta))$.
\end{proof}

\begin{theorem}
\label{initial-coloring-good-prop}
For $K$ a sufficiently large constant, properties (A1) and (A2) hold for every vertex w.h.p.
\end{theorem}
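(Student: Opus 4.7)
The plan is to establish each of (A1) and (A2) with failure probability at most $n^{-\Omega(K)}$ for an individual vertex, then take a union bound over all $n$ vertices. The key assumption that makes the union bound work is $\epsilon^4 \Delta \geq K \ln n$ (the complementary regime $\epsilon^4 \Delta < K \ln n$ is already handled before the algorithm even starts, by invoking \cite{BEPS16} directly).

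Property (A2) is essentially immediate from Lemma~\ref{sparse-conc-prop}: for any fixed sparse $v$, that lemma gives $S_0(v) = \Omega(\epsilon^2 \Delta)$ with failure probability at most $e^{-\Omega(\epsilon^4 \Delta)}$. Substituting $\epsilon^4 \Delta \geq K \ln n$ turns this into $n^{-\Omega(K)}$, which is $n^{-K'-1}$ once $K$ is chosen large enough relative to the desired w.h.p.\ exponent $K'$.

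For (A1), I would bound $Q_0(v)$ from below by a simple counting argument. A color is removed from $\pal(v)$ only if some neighbor $w \in N(v)$ commits to it, which requires $A(w) \neq 0$ in the first place. Hence
\[
Q_0(v) \;\geq\; |\pal(v)| - \bigl|\{ w \in N(v) : A(w) \neq 0 \}\bigr| \;\geq\; \Delta + 1 - X,
\]
where $X$ is a sum of at most $\Delta$ independent Bernoulli$(\alpha)$ random variables with $\alpha = 1/100$, so $\bE[X] \leq \Delta/100$. A standard Chernoff bound of the form $P(X \geq \Delta/2) \leq (e\bE[X]/(\Delta/2))^{\Delta/2} \leq (e/50)^{\Delta/2} = e^{-\Omega(\Delta)}$ shows $Q_0(v) \geq \Delta/2$ with failure probability $e^{-\Omega(\Delta)}$. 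Since $\epsilon \leq 1$, the hypothesis $\epsilon^4 \Delta \geq K \ln n$ forces $\Delta \geq K \ln n$, so this again becomes $n^{-\Omega(K)}$.

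Finally, a union bound over the at most $n$ vertices (all vertices for (A1), only sparse vertices for (A2)) gives a total failure probability of $n \cdot n^{-\Omega(K)} = n^{-\Omega(K)}$, which is at most $n^{-K'}$ for any desired constant $K'$ once $K$ is chosen sufficiently large. No real obstacle is anticipated: Lemma~\ref{sparse-conc-prop} does all the heavy lifting for (A2), and (A1) is a one-line Chernoff computation. The only point requiring a little care is observing that $\epsilon^4 \Delta \geq K \ln n$ implies $\Delta \geq K \ln n$ so that the Chernoff bound for (A1) is also strong enough for the union bound.
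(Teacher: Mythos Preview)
Your proposal is correct and follows essentially the same approach as the paper: invoke Lemma~\ref{sparse-conc-prop} for (A2) and a direct Chernoff bound on the number of neighbors choosing a non-blank tentative color for (A1), then take a union bound using $\epsilon^4\Delta \geq K\ln n$. Your explicit observation that $\epsilon\le 1$ is what turns $\epsilon^4\Delta\ge K\ln n$ into $\Delta\ge K\ln n$ for the (A1) union bound is a detail the paper leaves implicit.
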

\begin{proof}
By Lemma~\ref{sparse-conc-prop}, for any individual sparse vertex~$v$ the probability that (1) fails is at most~$e^{-\Omega(\epsilon^4 \Delta)}$. Since $\epsilon^4 \Delta \geq K \ln n$, this is at most~$n^{-K'}$ for an arbitrary large constant $K'>0$ given that $K$ is sufficiently large. (A2) follows by taking a union bound over all sparse vertices.

To show property (A1), fix some vertex~$v$, and note that any $w \in N(v)$ chooses an initial color with probability at most~$\alpha$, independently of any other vertices. Thus, a Chernoff bound shows that there is a probability of $e^{-\Omega(\Delta)}$ that more than $\Delta/2$ neighbors of $v$ are colored. So with probability $1 - e^{-\Omega(\Delta)}$, vertex~$v$ loses at most~$\Delta/2$ colors from its original palette of size of~$\Delta+1$. Again, this is at most~$n^{-K'}$ for an arbitrary constant~$K'>0$ given that~$K$ is sufficiently large.
\end{proof}

\section{Coloring the dense vertices}
\label{color-dense}
Suppose that we are at the beginning of the $i^{\text{th}}$ dense coloring step. We assume that there are parameters~$D_{i-1}$,~$Z_{i-1}$, such that all dense vertices $v$ have the following properties:
\begin{enumerate}
\item $a_{i-1}(v) \leq D_{i-1}$
\item ${\bar d}_{i-1}(v) \leq D_{i-1}$
\item $Q_{i-1}(v) \geq Z_{i-1}$
\end{enumerate}

Henceforth we will suppress the dependence on $i$ and write $D$, $Z$, $a(v)$, ${\bar d}(v)$, $\pal(v)$, and~$Q(v)$. We define $\delta = D/Z$ and $\gamma = (1 - 2 \sqrt{\delta})$.

Let us consider some almost-clique~$C_j$, with $M_j = |C_j|$ vertices. The dense coloring step for each $C_j$ generates a permutation~$\pi_j$ of its members. Starting from vertex~$\pi_j(1)$ up to vertex~$\pi_j(L_j)$, where $L_j = \lceil M_j \gamma \rceil$, each vertex selects a tentative color from its palette excluding the colors selected by lower rank vertices. (Note that the leader~$\ell_j$ in $C_j$ simulates this process.) Then, a vertex becomes \emph{de-colored} if an external neighbor from a lower indexed component chooses the same color.

Our goal is to show for some parameters $D'$ and $Z'$ that at the end of the round holds~$a_i(v) \leq D'$,~$\extd_i(v) \leq D'$ and~$Q_i(v) \geq Z'$. To do this, we will show that most vertices are colored in round~$i$.

We require throughout this section the following conditions on $D$ and~$Z$, which we will refer to as the \emph{regularity conditions}:
\begin{enumerate}
\item[(R1)] $D \delta \geq K \ln n$ for some sufficiently large constant~$K$
\item[(R2)] $\delta \leq 1/K$ for some sufficiently large constant~$K$
\end{enumerate}
Recall that $K$ is a universal constant that we will not explicitly compute. At several places we will assume it is sufficiently large. In Section~\ref{sec:solver} we will discuss how to satisfy these regularity conditions (or how our algorithm can succeed when they become false). \emph{In Section~\ref{color-dense}, we require implicitly that these regularity conditions all hold.} 

\subsection{Overview}
We first contrast our dense coloring procedure with a naive one, which assigns each vertex a random color and de-colors a vertex if there is a conflict. It is not hard to show that such a procedure successfully assigns a color to a vertex with constant probability. Thus, in each round, the degrees are shrinking by a constant factor in expectation. So it takes $\Omega(\log \Delta)$ rounds to reduce to a low (near-constant) degree.

In order to get a faster algorithm, we need to color much more than a constant fraction of all vertices per round. Here, the network decomposition plays the decisive role as only \emph{external} neighbors of a vertex~$v$ can de-color ~$v$. To illustrate, suppose that each vertex~$v$ selects a color from its palette uniformly at random. (That is, suppose we ignore the interaction between $v$ and the other vertices in $C_j$). Since the external neighbors are upper bounded by $D$ and the palette size is at least~$Z$, even if the external neighbors of $v$ choose distinct colors, the probability that $v$ has any conflicts with its neighbors is upper bounded by $D/Z = \delta = O(\epsilon)$. Ideally, we would like to show that each cluster shrinks by a factor of $\delta$ in each round. Moreover, one would also need to prove that the ratio~$D'/Z'$ in the next round remains approximately~$\delta$, so that the almost-cliques continue to shrink by the same factor.

The reason why we only attempt to color the first~$L_j$ vertices rather than the entire almost-clique is that we cannot afford the palette size to shrink too fast. A ``controlled'' uniform shrinking process maintains the overall ratio between palette size, external neighbors, and internal neighbors. This renders undesirable  scenarios very unlikely.

The following lemma uses the regularity conditions to show a useful bound on several parameters of the almost-clique.
\begin{lemma}
\label{aprop1}
For any $v \in C_j$, the following bound holds with probability one:
$$
Q(v) - L_j \geq Z \sqrt{\delta} + D
$$
\end{lemma}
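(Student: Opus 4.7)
The plan is to bound $M_j = |C_j|$ in terms of $Q(v)$ and $D$, then plug into the definitions of $L_j$ and $\gamma$, and finish with straightforward algebra using the regularity conditions.

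First I would observe the key structural bound. Since $v \in C_j$ and $a(v) = |C_j \setminus N(v)|$ (which includes $v$ itself), we have
\[
M_j \;=\; |C_j \cap N(v)| + a(v) \;\leq\; d(v) + a(v) \;\leq\; d(v) + D.
\]
Now use that the surplus $S(v) = Q(v) - d(v)$ is always at least $1$ (it starts at $\geq 1$ and, as noted right after the definition of $S$, is non-decreasing during the algorithm). Hence $d(v) \leq Q(v) - 1$, so $M_j \leq Q(v) + D - 1 \leq Q(v) + D$.

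Next, since $L_j = \lceil M_j \gamma \rceil \leq M_j \gamma + 1$ with $\gamma = 1 - 2\sqrt{\delta}$, I would compute
\[
Q(v) - L_j \;\geq\; Q(v) - (Q(v) + D)(1 - 2\sqrt{\delta}) - 1 \;=\; 2Q(v)\sqrt{\delta} \;-\; D \;+\; 2D\sqrt{\delta} \;-\; 1.
\]
Using $Q(v) \geq Z$, this is at least $2Z\sqrt{\delta} - D + 2D\sqrt{\delta} - 1$. The target inequality $Q(v) - L_j \geq Z\sqrt{\delta} + D$ therefore reduces to the arithmetic claim
\[
Z\sqrt{\delta} + 2D\sqrt{\delta} \;\geq\; 2D + 1.
\]

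The rest is purely a calculation with $\delta = D/Z$. Substituting $\sqrt{ZD}$ for both $Z\sqrt{\delta}$ and $D/\sqrt{\delta}$ rewrites the required inequality as
\[
\sqrt{ZD}\,\bigl(1 - 2\sqrt{\delta} + 2\delta\bigr) \;\geq\; 1.
\]
Under regularity condition (R2), $\delta \leq 1/K$ for $K$ large, so $1 - 2\sqrt{\delta} + 2\delta \geq 1/2$; under (R1), $D\delta \geq K\ln n$ forces $\sqrt{ZD} = D/\sqrt{\delta} \geq K\ln n/\delta^{3/2}$, which vastly exceeds $2$. Together these give the inequality and finish the proof.

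There is no real obstacle here — the only nonobvious step is recognizing that the persistent surplus invariant $S(v) \geq 1$ lets us replace the unbounded quantity $d(v)$ by the tracked quantity $Q(v)$, yielding the clean bound $M_j \leq Q(v) + D$. Once that is in hand, the choice $\gamma = 1 - 2\sqrt{\delta}$ is engineered precisely so that the $\sqrt{\delta}$ slack on the right of the target bound matches the gains from the two $2\sqrt{\delta}$ terms on the left.
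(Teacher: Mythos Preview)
Your proof is correct and follows essentially the same route as the paper: bound $M_j \leq d(v) + a(v) \leq Q(v) + D$ (the paper invokes $d(v) \leq Q(v)$ without explicitly naming the surplus invariant, but it is the same step), expand $Q(v) - L_j$ using $\gamma = 1 - 2\sqrt{\delta}$, apply $Q(v) \geq Z$, and finish with the identity $Z\sqrt{\delta} = D/\sqrt{\delta} = \sqrt{ZD}$ together with the regularity conditions. The only cosmetic difference is that the paper splits $2Z\sqrt{\delta}$ as $Z\sqrt{\delta} + D/\sqrt{\delta}$ and then argues $D/\sqrt{\delta} + 2D\sqrt{\delta} - D - 1 \geq D$ directly from $1/\delta \geq K$ and $D \geq 2$, whereas you factor out $\sqrt{ZD}$; the two endgames are equivalent.
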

\begin{proof}
Note that $M _j = |N(v) \cap C_j| + |C_j \setminus N(v)| \leq d(v) + a(v) \leq Q(v) + D$. Therefore,
{\allowdisplaybreaks
\begin{align*}
Q(v) - L_j &\geq Q(v) - \lceil M \gamma \rceil \geq Q(v) -  M_j (1 - 2 \sqrt{\delta})- 1 \\
&\geq Q(v) -  (Q(v)+D)(1 - 2 \sqrt{\delta})  - 1 \\
&= 2 Q(v) \sqrt{\delta} + 2 D \sqrt{\delta} - D - 1 \\
&\geq 2 Z \sqrt{\delta} + 2 D \sqrt{\delta} - D - 1 \qquad \text{as $Q(v) \geq Z$}  \\
&=Z \sqrt{\delta}  + D/\sqrt{\delta} + 2 D \sqrt{\delta} - D - 1 \qquad \text{ as $D = \delta Z$ by definition} \\
&\geq Z \sqrt{\delta}+ D \qquad \text{as $(1/\delta) \geq K$ and $D \geq 2$ by regularity conditions}
\end{align*}
}
\end{proof}

\subsection{Concentration of the number of uncolored vertices}
We will show that most vertices become colored at the end of a dense coloring step. We distinguish two ways in which a vertex~$v$ could fail to be colored: first, it may be \emph{de-colored} in the sense that it initially chose a color, but then had a conflict with an almost-clique of smaller index. Second, it may be \emph{initially-uncolored} in the sense that~$\pi_j^{-1} (v) > L_j$.

\begin{lemma}
\label{color-lemma}
Let $T = \{v_1, \dots, v_t \} \subseteq C_j$. Let $c_1, \dots, c_t$ be an arbitrary sequence of non-blank colors. Then
$$
P( A(v_1) = c_1 \wedge \dots \wedge A(v_t) = c_t  ) \leq (Z \sqrt{\delta})^{-t}
$$
\end{lemma}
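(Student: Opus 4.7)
The plan is a straightforward chain-rule argument conditioned on the permutation $\pi_j$, exploiting the uniform lower bound on residual palette sizes furnished by Lemma~\ref{aprop1}. First I would fix an arbitrary permutation $\pi_j$ and, within it, reindex $v_1, \dots, v_t$ so that their positions $p_1 < p_2 < \dots < p_t$ in $\pi_j$ are increasing. If any $p_k > L_j$, then $v_k$ is never assigned a tentative color, so $A(v_k) = 0 \neq c_k$ and the joint probability is zero. Hence I may assume $p_t \leq L_j$.

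Next I would walk through the process in the order dictated by $\pi_j$. At position $p$, the vertex $u = \pi_j(p)$ picks $A(u)$ uniformly from $\pal(u) \setminus \{A(\pi_j(1)), \dots, A(\pi_j(p-1))\}$, a set of size at least $Q(u) - (p-1)$. Since $p \leq L_j$, Lemma~\ref{aprop1} gives
$$
Q(u) - (p - 1) \;\geq\; Q(u) - L_j + 1 \;\geq\; Z \sqrt{\delta} + D + 1 \;\geq\; Z \sqrt{\delta}.
$$
Therefore, conditional on the permutation and on the entire history of color selections made at positions $1, \dots, p-1$, the probability that $A(u)$ equals any particular color is at most $(Z\sqrt{\delta})^{-1}$ (and is zero if that color has already been used).

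I would then apply the chain rule to the $t$ events $A(v_k) = c_k$, taken in the order $p_1 < p_2 < \dots < p_t$. At step $k$, the prior selections are fully determined by the history; the conditional probability is bounded by $(Z\sqrt{\delta})^{-1}$ by the previous paragraph. Multiplying yields
$$
P\bigl(A(v_1) = c_1 \wedge \dots \wedge A(v_t) = c_t \,\big|\, \pi_j \bigr) \;\leq\; (Z \sqrt{\delta})^{-t}.
$$
Since this holds for every realization of $\pi_j$, averaging over $\pi_j$ preserves the bound.

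There is no real obstacle here; the content lies entirely in invoking Lemma~\ref{aprop1} uniformly across positions $p \leq L_j$ so that every conditional palette is at least $Z\sqrt{\delta}$, which in turn is what makes the chain-rule telescoping clean. The only subtlety worth noting, but which is automatically handled by the ``at most'' bound, is that the colors $c_1, \dots, c_t$ need not be distinct: if two coincide, one of the corresponding conditional probabilities is forced to $0$, which is still bounded by $(Z\sqrt{\delta})^{-1}$.
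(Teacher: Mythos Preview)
Your proposal is correct and follows essentially the same argument as the paper: condition on $\pi_j$, handle the case where some $v_k$ lies beyond position $L_j$ trivially, then use the chain rule together with Lemma~\ref{aprop1} to bound each conditional factor by $(Z\sqrt{\delta})^{-1}$. Your treatment is in fact slightly more careful in tracking the residual palette size as $Q(u)-(p-1)$ rather than $Q(u)-L_j$, and in explicitly noting that the bound is uniform in $\pi_j$ and that repeated colors only help; these are cosmetic refinements of the same proof.
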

\begin{proof}
Let us condition on the permutation~$\pi_j$, and without of loss generality $\pi^{-1}_j(v_1) < \pi^{-1}_j(v_2) < \dots < \pi^{-1}_j (v_t)$. We assume $\pi^{-1}_j(v_t) \leq L_j$ as otherwise $A(v_t) = 0$.

For each $i = 1, \dots, t$ we claim that $P(A(v_i) = c_i) \leq \frac{1}{Q(v_i) - L_j}$, even after conditioning on all the colors choices made by vertices $w$ with $\pi^{-1}_j(w) < \pi_j^{-1}(v_i)$. For, at this point, at most $\pi^{-1}_j(v_i) \leq L_j$ colors from the palette of $v_i$ have been used by previously-colored neighbors from $C_j$. Hence, $v_i$ has a remaining palette of size $Q(v_i) - L_j$. Using Lemma~\ref{aprop1} then gives
\begin{align*}
P(A(v_i) = c_i \mid A(v_1) = c_1 \wedge \dots \wedge A(v_{i-1}) = c_{i-1}) &\leq \frac{1}{Q(v_i) - L_j}  \leq \frac{1}{Z \sqrt{\delta}}
\end{align*}
\end{proof}

\begin{lemma}
\label{prop1}
Let $T \subseteq V^{\text{dense}}$. The probability that all the vertices in $T$ become de-colored is at most $(\sqrt{\delta})^{|T|}$.
\end{lemma}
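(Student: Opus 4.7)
The plan is a witness-based union bound combined with a clique-by-clique conditioning argument. For each $v \in T$, being de-colored means there exists an external dense neighbor $w \in N(v)$ in an almost-clique with strictly smaller leader $\ID$ than $v$'s, and with $A(w) = A(v) \neq 0$. Call such a $w$ a \emph{witness} for $v$. By the invariant $\extd(v) \leq D$, each $v \in T$ has at most $D$ candidate witnesses, so
\[
P\Bigl( \bigcap_{v \in T} \text{$v$ is de-colored} \Bigr) \;\leq\; \sum_{(w_v)_v} P\Bigl( \bigcap_{v \in T} A(v) = A(w_v) \neq 0 \Bigr) \;\leq\; D^{|T|} \cdot \max_{(w_v)} P\Bigl( \bigcap_{v \in T} A(v) = A(w_v) \neq 0 \Bigr).
\]

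The central step is to show that for \emph{every} fixed witness tuple $(w_v)$,
\[
P\Bigl( \bigcap_{v \in T} A(v) = A(w_v) \neq 0 \Bigr) \;\leq\; (Z\sqrt{\delta})^{-|T|}.
\]
Combined with the $D^{|T|}$ prefactor this gives $(D/(Z\sqrt{\delta}))^{|T|} = (\sqrt{\delta})^{|T|}$, since $\delta = D/Z$. The critical structural fact is that $w_v$ always lies in an almost-clique of strictly smaller leader $\ID$ than $v$'s. I order the almost-cliques $C_1, C_2, \ldots, C_k$ by increasing leader $\ID$ and reveal the within-clique color vectors $\sigma_1, \sigma_2, \ldots$ in that order. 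When it is time to process $C_j$, the values $c_v^* := A(w_v)$ for $v \in T_j := T \cap C_j$ are already fixed by $\sigma_{<j}$. If some $c_v^* = 0$, then $A(v) = c_v^* \neq 0$ is impossible and the factor is $0$; otherwise, since the coloring process inside $C_j$ is independent of $\sigma_{<j}$, Lemma~\ref{color-lemma} applied in $C_j$ to the vertices $T_j$ with target colors $c_v^*$ gives
\[
P\Bigl( \bigcap_{v \in T_j} A(v) = c_v^* \,\Big|\, \sigma_{<j} \Bigr) \;\leq\; (Z\sqrt{\delta})^{-|T_j|}.
\]
Chaining these conditional bounds by the tower property and using $\sum_j |T_j| = |T|$ yields $(Z\sqrt{\delta})^{-|T|}$ as required.

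The main subtlety I anticipate is handling the correlations created when witnesses coincide (multiple $v$'s sharing a $w_v$) or when a witness itself lies in $T$. A naive approach that expands the inner probability as a sum over color tuples $(c_v)_{v \in T}$ and then applies Lemma~\ref{color-lemma} separately to the $v$-side and the $w$-side would pick up an unwanted palette-size factor $Q^r$ (where $r$ counts equivalence classes forced by coinciding witnesses), destroying the bound. The ordering trick bypasses this entirely: we never sum over the witnesses' colors, because the witnesses live in previously-processed cliques and their colors appear as \emph{fixed parameters} in the conditional probability, not as free variables. This absorbs all $w_v$-constraints into the conditioning and leaves Lemma~\ref{color-lemma} to be invoked on each clique only for the $v$-side vertices $T_j$, producing the clean product bound.
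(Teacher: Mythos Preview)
Your proof is correct and follows essentially the same approach as the paper: order the almost-cliques by leader $\ID$, then for each $C_j$ bound the conditional probability (given $\sigma_{<j}$) that all of $T_j$ match their witnesses' colors via Lemma~\ref{color-lemma}, picking up a factor $(Z\sqrt{\delta})^{-|T_j|}$ against the $D^{|T_j|}$ choices of witness. The only cosmetic difference is that you pull the union bound over witness tuples $(w_v)_{v\in T}$ outside the entire argument, whereas the paper applies it clique-by-clique inside the conditioning; since $D^{|T|}=\prod_j D^{|T_j|}$ these are equivalent, and your explicit discussion of why coinciding witnesses and witnesses lying in $T$ cause no trouble is a nice clarification.
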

\begin{proof}
Let us sort the almost-cliques $C_1, \dots, C_k$ by the vertex ID of their leaders, so that $\ell_1\leq\ell_2\leq~\dots~\leq~\ell_k$. For each $j= 1, \dots, k$ we define $T_j = T \cap C_j$. 

We will show for any $j$ that the vertices in $T_j$ become de-colored with probability at most $(\sqrt{\delta})^{|T_j|}$, conditioned on the event that the vertices in $T_1, \dots, T_{j-1}$ become de-colored.  In fact, we will not just condition on the event that the vertices in $T_1, \dots, T_{j-1}$ become de-colored, but we will condition on the complete set of random variables involved in $C_1, \dots, C_{j-1}$. (The event that $T_j$ becomes de-colored is a function of only the colors involved in $C_1, \dots, C_j$.)

For each $v \in T_j$, the event that $v$ becomes de-colored is a union of at most ${\bar d}(v) \leq D$ events of the form $\chi(v) = c$, where $c$ enumerates the colors selected by vertices in $N(v) \cap (C_1 \cup C_2 \cup \dots \cup C_{j-1})$. Hence, the event that all of the vertices in $T_j$ become de-colored is a union of $D^{|T_j|}$ events of the form stated in Lemma~\ref{color-lemma}, each of which has probability at most $(Z \sqrt{\delta})^{-|T_j|}$. Therefore, the probability that all of them become de-colored is $(\frac{D}{Z \sqrt{\delta}})^{|T_j|} = (\sqrt{\delta})^{|T_j|}$.
\end{proof}

\begin{lemma}
\label{prop2}
Let $T  \subseteq V^{\text{dense}}$. The probability that all of the vertices in $T$ are initially uncolored is at most $(2 \sqrt{\delta})^{|T|}$.
\end{lemma}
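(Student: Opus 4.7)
The plan is to reduce the claim to a statement about random permutations within each almost-clique, and then to exploit independence across cliques.

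First, I would partition $T$ according to the cliques: for each $j$, let $T_j = T \cap C_j$, so that $T = T_1 \sqcup \dots \sqcup T_k$ and $|T| = \sum_j |T_j|$. A vertex $v \in C_j$ is initially uncolored precisely when $\pi_j^{-1}(v) > L_j$, which depends only on the permutation $\pi_j$. Since the permutations $\pi_1,\dots,\pi_k$ are chosen independently across distinct almost-cliques, the events ``all of $T_j$ are initially uncolored'' are independent in $j$. It therefore suffices to show, for each single $j$, that
\[
P\bigl( \text{every } v \in T_j \text{ has } \pi_j^{-1}(v) > L_j \bigr) \leq (2\sqrt{\delta})^{|T_j|},
\]
and then multiply across $j$.

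For a fixed $j$, write $m = M_j = |C_j|$, $\ell = L_j = \lceil m\gamma \rceil$, and $t = |T_j|$. The set $\{\,v : \pi_j^{-1}(v) > L_j\,\}$ is a uniformly random subset of $C_j$ of size $m-\ell$. The probability that this random subset contains the fixed set $T_j$ is $0$ if $t > m-\ell$, and otherwise equals $\binom{m-\ell}{t}/\binom{m}{t}$. I would estimate this ratio by writing
\[
\frac{\binom{m-\ell}{t}}{\binom{m}{t}} = \prod_{i=0}^{t-1}\frac{m-\ell-i}{m-i} \leq \left(\frac{m-\ell}{m}\right)^{t},
\]
where the inequality $\frac{m-\ell-i}{m-i} \leq \frac{m-\ell}{m}$ holds because cross-multiplying gives $0 \leq \ell i$. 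Then using $\ell \geq m\gamma$ yields $(m-\ell)/m \leq 1-\gamma = 2\sqrt{\delta}$, so the probability is at most $(2\sqrt{\delta})^t$, as required.

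Multiplying these per-clique bounds via independence of the $\pi_j$'s gives the stated bound $(2\sqrt{\delta})^{|T|}$. There is no real obstacle here; the only point worth watching is the boundary case $t > m-\ell$, which I handle by noting that the probability vanishes, and the mild subtlety that $L_j$ is the ceiling of $m\gamma$ (which only makes $\ell$ larger and thus only helps the bound).
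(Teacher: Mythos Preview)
Your proof is correct and follows essentially the same approach as the paper: reduce to a single almost-clique by independence of the permutations, compute the hypergeometric probability that a fixed $t$-set lands entirely in the uncolored tail, and bound it by $(1-\gamma)^t = (2\sqrt{\delta})^t$. The paper writes the probability as $\binom{M_j-|T_j|}{L_j}/\binom{M_j}{L_j}$ rather than your $\binom{m-\ell}{t}/\binom{m}{t}$, but these are equal by the standard hypergeometric symmetry, and the remaining estimates are identical.
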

\begin{proof}
It suffices to show that for a particular~$C_j$, the probability that all vertices in $T_j = T \cap C_j$ are initially uncolored is bounded by $(2 \sqrt{\delta})^{|T_j|}$, since the nodes in distinct almost-cliques make their choices independently.

We select from~$C_j$ a set of $L_j$ vertices to be colored, uniformly at random without replacement. Thus, the probability that all vertices in $T_j$ are initially-uncolored is:
{\allowdisplaybreaks
\begin{align*}
P( \text{$T_j$ is initially uncolored}) &= \frac{\binom{M_j-|T_j|}{L_j}}{\binom{M_j}{L_j}} 
\end{align*}
If $|T_j| > M_j-L_j$, the right hand side is zero and we are done. Otherwise,
\begin{align*}
P( \text{$T_j$ is initially uncolored}) &= \Bigl(\frac{M_j - |T_j|}{M_j}\Bigr) \Bigl(\frac{M_j - |T_j| - 1}{M_j - 1}\Bigr) \dots \Bigl(\frac{M_j-L_j+1 - |T_j|}{M_j - L_j + 1}\Bigr) \\
&\leq \Bigl(\frac{M_j - 1}{M_j}\Bigr)^{|T_j|} \Bigl( \frac{M_j-2}{M_j-1} \Bigr)^{|T_j|} \dots \Bigl(\frac{M_j-L_j+1 - 1}{M_j - L_j + 1}\Bigr)^{|T_j|} \\
&=\Bigl(1 - \frac{L_j}{M_j} \Bigr)^{|T_j|} \leq \Bigl (1 - \frac{ M_j (1 - 2 \sqrt{\delta} ) }{M_j} \Bigr)^{|T_j|} = ( 2 \sqrt{\delta} )^{|T_j|}
\end{align*}
}
\end{proof}

\begin{lemma}
\label{prop3}
Let $T \subseteq V^{\text{dense}}$, and let~$s$ be a real number with $s \geq |T|, s \geq \frac{K \ln n}{\sqrt{\delta}}$. Then the probability that~$T$ contains more than $12 s \sqrt{\delta}$ uncolored vertices at the end of round~$i$ is at most~$n^{-K'}$ for an arbitrarily large constant $K'>0$.
\end{lemma}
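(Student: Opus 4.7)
The plan is to bound separately the two ways a vertex can be uncolored at the end of round~$i$: it is either \emph{de-colored} (it picked a tentative color that conflicted with a lower-indexed almost-clique) or \emph{initially uncolored} (its rank in $\pi_j$ exceeded $L_j$). Every uncolored vertex falls into one of these two categories, so if $T$ contains more than $12 s \sqrt{\delta}$ uncolored vertices, then by pigeonhole some set $T' \subseteq T$ of size $m = \lceil 6 s \sqrt{\delta} \rceil$ is entirely de-colored, or some such $T'$ is entirely initially uncolored.

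I would then take a union bound over both cases and over all choices of $T'$. The probability that some specific $T'$ of size $m$ is entirely de-colored is at most $(\sqrt{\delta})^m$ by Lemma~\ref{prop1}, and the probability that it is entirely initially uncolored is at most $(2\sqrt{\delta})^m$ by Lemma~\ref{prop2}. Since $|T| \leq s$, the number of such subsets is at most $\binom{s}{m} \leq (e s / m)^m$. Thus the two contributions are bounded, respectively, by
\[
\binom{s}{m} (\sqrt{\delta})^m \leq \Bigl( \frac{e}{6} \Bigr)^m \quad \text{and} \quad \binom{s}{m} (2\sqrt{\delta})^m \leq \Bigl( \frac{e}{3} \Bigr)^m.
\]

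Since $s \geq \frac{K \ln n}{\sqrt{\delta}}$, we have $m \geq 6 K \ln n$, so both bounds are of the form $n^{-cK}$ for some absolute constant $c > 0$ (here one uses $\ln(3/e) > 0$ and $\ln(6/e) > 0$). Taking $K$ sufficiently large compared to $K'$ makes the total probability at most $n^{-K'}$, as desired.

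The only subtle point is making sure the pigeonhole split is clean: a single vertex cannot be both initially uncolored and de-colored (de-coloring presupposes that the vertex selected a tentative color), so the two categories are disjoint and every uncolored vertex lies in exactly one. The rest is the standard binomial-times-geometric-tail estimate; the constant $12$ in the statement is precisely what makes the exponent $m$ large enough so that the binomial coefficient is dominated by the small $(\sqrt{\delta})^m$ factor.
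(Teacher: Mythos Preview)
Your proposal is correct and follows essentially the same argument as the paper: split uncolored vertices into the two disjoint categories (de-colored versus initially uncolored), observe that one category must contain at least $\lceil 6s\sqrt{\delta}\rceil$ vertices, and then take a union bound over subsets of that size using Lemmas~\ref{prop1} and~\ref{prop2}. The paper sets $x = 6s\sqrt{\delta}$ and carries out only the initially-uncolored case explicitly (arriving at the same $(2e/6)^x = (e/3)^x$ bound you obtained), noting the de-colored case is analogous; your write-up is slightly more explicit but otherwise identical.
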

\begin{proof}
Let $x = 6 s \sqrt{\delta}$. We claim that the number of de-colored vertices in $T$ is at most $x$ with probability~$1 - n^{-K'}/2$ and we also claim that the number of initially-uncolored vertices is at most $x$ with probability $1 - n^{-K'}/2$; combining these two claims gives the stated result.

The proofs are nearly the same, so we show only the latter one. By a union bound over all possible sets of size~$\lceil x \rceil$, the probability that the number of initially-uncolored vertices exceeds $\lceil x \rceil$ is at most
{\allowdisplaybreaks
\begin{align*}
\binom{|T|}{\lceil x \rceil} (2 \sqrt{\delta})^{\lceil x \rceil} &\leq \Bigl( \frac{e|T|}{\lceil x \rceil} \Bigr)^{\lceil x \rceil} \cdot (2 \sqrt{\delta})^{\lceil x \rceil} \\
&\leq \Bigl( \frac{e s (2 \sqrt{\delta})}{6 s \sqrt{\delta}} \Bigr)^{x} \qquad \text{as $x = 6 s\sqrt{\delta}$ and $|T| \leq s$}\\
&\leq \Bigl( \frac{2e}{6} \Bigr)^{K \ln n} \qquad \text{as $x \geq 6 s \sqrt{\delta} \geq K \ln n$} \\
&\leq n^{-K'}/2 \qquad \text{for $K$ a sufficiently large constant.}
\end{align*}
}
\end{proof}

\begin{lemma}
\label{r-prop1}
W.h.p. at the end of round~$i$, every dense vertex~$v$ satisfies the bounds
$$
a_i(v) \leq  D',\quad \extd_i(v) \leq D',\quad Q_i(v)  \geq Z'
$$
for the parameters
$$
D' = 12 D \sqrt{\delta} \qquad Z' = D/\sqrt{\delta}
$$
\end{lemma}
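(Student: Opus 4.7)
The plan is to verify the three bounds separately. The palette bound $Q_i(v) \geq Z'$ will come out deterministically from Lemma~\ref{aprop1}, while the two degree bounds will be applications of the concentration bound in Lemma~\ref{prop3}, followed by a union bound over all dense vertices.

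For the palette, I first observe that during the $i^{\text{th}}$ dense coloring step a color can leave $\pal(v)$ only when some neighbor of $v$ commits to that color. Only the first $L_j$ vertices of $C_j$ even try for a color, so at most $L_j$ neighbors of $v$ inside $C_j$ can commit; the only other neighbors that could commit are dense external neighbors, of which there are at most $\extd_{i-1}(v) \leq D$. (Sparse neighbors are inert during dense coloring; any colors they previously removed have already been absorbed into $Q_{i-1}(v)$.) Thus $Q_i(v) \geq Q_{i-1}(v) - L_j - D$, and Lemma~\ref{aprop1} gives $Q_{i-1}(v) - L_j \geq Z\sqrt{\delta} + D$, so $Q_i(v) \geq Z\sqrt{\delta} = D/\sqrt{\delta} = Z'$ with probability one.

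For $a_i(v)$ and $\extd_i(v)$ I would invoke Lemma~\ref{prop3} twice per vertex. Let $T_a$ be the uncolored vertices of $C_j \setminus N(v)$ at the start of round $i$, and let $T_e$ be the uncolored dense neighbors of $v$ outside $C_j$; by hypothesis $|T_a| \leq a_{i-1}(v) \leq D$ and $|T_e| \leq \extd_{i-1}(v) \leq D$. Setting $s = D$, the requirement $s \geq |T|$ is immediate, while $s \geq K \ln n / \sqrt{\delta}$ follows from regularity condition (R1): since $\delta \leq 1/K \leq 1$, one has $K \ln n / \sqrt{\delta} \leq K \ln n / \delta \leq D$. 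Lemma~\ref{prop3} then says that at most $12 s \sqrt{\delta} = 12 D \sqrt{\delta} = D'$ vertices of $T_a$ (respectively $T_e$) remain uncolored at the end of the round, except with probability $n^{-K'}$.

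The final step is a union bound over the $2n$ sets (one $T_a$ and one $T_e$ per dense vertex), which converts the pointwise bounds into the claimed w.h.p.\ statement. I do not anticipate a real obstacle here, since Lemma~\ref{prop3} already packages the concentration argument; the one subtle point is keeping careful track of how the palette can shrink during a single round (making sure the $L_j + D$ accounting covers all committing neighbors) and verifying that (R1) supplies the threshold condition needed by Lemma~\ref{prop3}.
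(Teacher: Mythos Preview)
Your proposal is correct and matches the paper's proof essentially step for step: the palette bound is obtained deterministically via Lemma~\ref{aprop1} exactly as you describe, and the bounds on $a_i(v)$ and $\extd_i(v)$ come from applying Lemma~\ref{prop3} with $s=D$ (using (R1) to verify $D \geq K\ln n/\sqrt{\delta}$), followed by a union bound over all dense vertices. If anything, you are slightly more explicit than the paper in justifying the $L_j + D$ accounting and in spelling out why (R1) delivers the threshold condition for Lemma~\ref{prop3}.
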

\begin{proof}
Let $v \in C_j$. We first note that the regularity conditions imply $D \delta \geq K \ln n$ and hence $D \geq \frac{K \ln n}{\sqrt{\delta}}$.

So we may apply Lemma~\ref{prop3} with $T$ being the set of external neighbors of~$v$ and $s = D$ to show that that $\extd_i(v) \leq D'$ holds with probability at least~$1 - n^{-K'}$ for an arbitrarily large constant $K'>0$. Similarly, we apply Lemma~\ref{prop3} with $T = C_j \setminus N(v)$ and $s = D$ to show that that $a_i(v) \leq D'$ holds with probability~$\geq 1 - n^{-K'}$.  

Next, we bound~$Q_i(v)$. We color at most~$D$ external neighbors and at most~$L_j$ internal neighbors. Thus, the residual palette of~$v$ has size at least~$Q_i(v) - L_j - D$. By Lemma~\ref{aprop1}, this is at least~$D/\sqrt{\delta}$ for $K$ sufficiently large.

Finally, take a union bound over all dense vertices $v$.
\end{proof}

\begin{proposition}
\label{aprop2}
W.h.p. at the end of round~$i$, every almost-clique $C_j$ has size at most
$$
\max(12 K\ln n, 12 M_j \sqrt{\delta} )
$$
\end{proposition}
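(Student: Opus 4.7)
The plan is to apply Lemma~\ref{prop3} directly, taking $T = C_j$, and then to union-bound over all almost-cliques. The only subtlety is choosing the right value of the parameter $s$ so that both hypotheses of Lemma~\ref{prop3} are satisfied; this is what produces the two terms of the maximum.

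Concretely, I would fix an almost-clique $C_j$ and split into two cases. If $M_j \geq \frac{K \ln n}{\sqrt{\delta}}$, then set $s = M_j$; both hypotheses $s \geq |T|$ and $s \geq \frac{K \ln n}{\sqrt{\delta}}$ of Lemma~\ref{prop3} are satisfied, and so with probability at least $1 - n^{-K'}$ the number of uncolored vertices of $C_j$ at the end of round~$i$ is at most $12 s \sqrt{\delta} = 12 M_j \sqrt{\delta}$. If instead $M_j < \frac{K \ln n}{\sqrt{\delta}}$, then set $s = \frac{K \ln n}{\sqrt{\delta}}$; again both hypotheses hold (since now $s \geq M_j = |T|$), and so with probability at least $1 - n^{-K'}$ the number of uncolored vertices of $C_j$ is at most $12 s \sqrt{\delta} = 12 K \ln n$. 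In either case, the size of $C_j$ after round~$i$ is bounded by $\max(12 K \ln n, 12 M_j \sqrt{\delta})$ with probability at least $1 - n^{-K'}$.

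Finally, a union bound over the at most $n$ almost-cliques (taking $K'$ large enough, which is fine since Lemma~\ref{prop3} holds for arbitrarily large $K'$) yields the stated w.h.p.\ bound.

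There is really no main obstacle here: the work of bounding the number of uncolored vertices in a given vertex set has already been done in Lemma~\ref{prop3}, and this proposition is just the specialization to $T = C_j$ with the appropriate choice of $s$. The $\max$ in the statement is precisely the artifact of having to ensure the lower bound $s \geq K \ln n/\sqrt{\delta}$ required by Lemma~\ref{prop3}.
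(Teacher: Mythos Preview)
Your proposal is correct and essentially identical to the paper's proof: the paper applies Lemma~\ref{prop3} with $T = C_j$ and $s = \max(M_j, \tfrac{K \ln n}{\sqrt{\delta}})$, then union-bounds over all almost-cliques. Your case split on whether $M_j \geq \tfrac{K \ln n}{\sqrt{\delta}}$ just unpacks that same choice of $s$.
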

\begin{proof}
 Apply Lemma~\ref{prop3} with $T = C_j$ and $s = \max( M_j, \frac{K \ln n}{\sqrt{\delta}})$; this shows that with probability at least~$1 - n^{-K'}$ for an arbitrarily large constant $K'>0$  there are at most $\max(12 K \ln n, 12 M_j \sqrt{\delta})$ uncolored vertices remaining in~$C_j$. Finally, take a union bound over all almost-cliques~$C_j$.
\end{proof}

\section{Solving the recurrence}
\label{sec:solver}
In light of Lemma~\ref{r-prop1}, we can explicitly derive a recurrence relation for the parameters~$D_i$ and~$Z_i$. We define $\delta_i = D_i/Z_i$ throughout.

\begin{lemma}
\label{r-prop0}
Define the recurrence relation with initial conditions
$$
D_0 = 3 \epsilon \Delta \qquad Z_0 = \Delta/2
$$
and recurrence
$$
D_{i+1} = 12 D_i \sqrt{\delta_i} \qquad Z_{i+1} = D_i/\sqrt{\delta_i}
$$

Let $i \leq \lceil \sqrt{\ln \Delta} \rceil$. Assuming that the regularity conditions (R1), (R2) are satisfied for $j = 0, \dots, i-1$, then we have w.h.p.:
$$
a_i(v) \leq D_i, \extd_i(v) \leq D_i, Q_i(v) \geq Z_i
$$
\end{lemma}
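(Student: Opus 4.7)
The plan is to proceed by induction on $i$, with the inductive step being essentially a direct application of Lemma~\ref{r-prop1} and a union bound over the $O(\sqrt{\ln \Delta})$ rounds.

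For the base case $i = 0$, I would argue all three bounds as follows. The bound $Q_0(v) \geq \Delta/2 = Z_0$ holds w.h.p. from property (A1) of Theorem~\ref{initial-coloring-good-prop}. For the external degree, Lemma~\ref{network1} gives $\extd(v) \leq \epsilon \Delta$ for every dense vertex in the original graph~$G$, and since the initial coloring step can only color vertices and remove them from the residual dense set, we have $\extd_0(v) \leq \epsilon \Delta \leq 3 \epsilon \Delta = D_0$ with certainty. The same monotonicity argument applied to Lemma~\ref{network2} gives $a_0(v) \leq 3 \epsilon \Delta = D_0$ deterministically.

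For the inductive step, suppose the claim holds for index~$i-1$, so that w.h.p.\ every dense vertex~$v$ satisfies $a_{i-1}(v) \leq D_{i-1}$, $\extd_{i-1}(v) \leq D_{i-1}$, and $Q_{i-1}(v) \geq Z_{i-1}$. Since the regularity conditions at step~$i-1$ are assumed to hold, Lemma~\ref{r-prop1} (applied with $D = D_{i-1}$, $Z = Z_{i-1}$, and $\delta = \delta_{i-1}$) shows that w.h.p.\ at the end of round~$i$ every dense vertex satisfies
\[
a_i(v) \leq 12 D_{i-1} \sqrt{\delta_{i-1}} = D_i, \quad \extd_i(v) \leq D_i, \quad Q_i(v) \geq D_{i-1}/\sqrt{\delta_{i-1}} = Z_i,
\]
matching the recurrence exactly. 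Each round incurs a failure probability of at most $n^{-K'}$ for an arbitrarily large constant $K'$; union-bounding over the at most $\lceil \sqrt{\ln \Delta} \rceil \leq O(\log n)$ rounds (and absorbing the polynomial factor into a slight decrease of the constant in the exponent) preserves the w.h.p.\ guarantee throughout.

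The main subtlety I anticipate is not in the induction itself but in making sure the accounting of ``residual versus original'' objects is consistent: $C_j$ in round~$i$ means the intersection of the original component with the uncolored vertices, so removing colored vertices only shrinks $|C_j \setminus N(v)|$ and the dense external neighborhood, which is what makes the base case clean. Everything else is a straightforward chaining of Lemma~\ref{r-prop1} across $\lceil \sqrt{\ln \Delta} \rceil$ steps, with the bookkeeping of the recurrence $D_{i+1} = 12 D_i \sqrt{\delta_i}$, $Z_{i+1} = D_i/\sqrt{\delta_i}$ deferred to Section~\ref{sec:solver} where the regularity conditions will be verified in parallel.
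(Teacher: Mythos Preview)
Your proposal is correct and follows essentially the same approach as the paper: establish the base case via Theorem~\ref{initial-coloring-good-prop} for $Z_0$ and Lemmas~\ref{network1} and~\ref{network2} (plus monotonicity under the initial coloring step) for $D_0$, then chain Lemma~\ref{r-prop1} inductively and take a union bound over the $O(\sqrt{\ln\Delta})$ rounds. Your discussion of the residual-versus-original bookkeeping is a helpful clarification but not a departure from the paper's argument.
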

\begin{proof}
The bound on $Z_0$ is given in Theorem~\ref{initial-coloring-good-prop}. By Lemma~\ref{network1} and Lemma~\ref{network2}, we get~$a(v) \leq 3 \epsilon \Delta$ and  $\extd(v) \leq 3 \epsilon \Delta$ in the initial graph. The initial coloring step cannot increase these parameters, so we have $a_0(v) \leq 3 \epsilon \Delta, \extd_0(v) \leq 3 \epsilon \Delta$ as well. This shows the bound on~$D_0$.

A simple induction, using Lemma~\ref{r-prop1}, shows that for all $i = 1, \dots, n$ we have the following:
$$
a_i(v) \leq D_i, \extd_i(v) \leq D_i, Q_i(v)  \geq Z_i \qquad \text{with probability $1 - n^{-K'}$}
$$

Thus, for any fixed $i \leq \lceil \sqrt{\ln \Delta} \rceil$, the probability that any of these events does not occur, is at most $(1 + \sqrt{\ln \Delta}) n^{-K'} \leq n^{-K'+1}$ for an arbitrarily large constant $K'>0$.
\end{proof}

We will now show how to solve this recurrence.
\begin{lemma}
\label{r-prop3}
For all $i \leq \lceil \sqrt{\ln \Delta} \rceil$ we have $\delta_i = 6 \epsilon \cdot 12^i \leq 1/K$.
\end{lemma}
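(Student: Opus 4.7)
The plan is to split the lemma into two independent computations: first derive the closed form $\delta_i = 6\epsilon \cdot 12^i$ from the recurrence, and then plug in the definition of $\epsilon$ to verify the $1/K$ bound.

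First I would compute the ratio directly from the recurrence. By definition,
\[
\delta_{i+1} = \frac{D_{i+1}}{Z_{i+1}} = \frac{12 D_i \sqrt{\delta_i}}{D_i/\sqrt{\delta_i}} = 12 \delta_i.
\]
The base case is immediate from the initial conditions: $\delta_0 = D_0/Z_0 = (3\epsilon \Delta)/(\Delta/2) = 6\epsilon$. A trivial induction then gives $\delta_i = 6 \epsilon \cdot 12^i$, which handles the equality in the lemma statement.

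For the inequality $\delta_i \leq 1/K$, I would substitute the definition $\epsilon = 100^{-\sqrt{\ln \Delta}}/(100 K)$ and use $i \leq \lceil \sqrt{\ln \Delta}\rceil \leq \sqrt{\ln \Delta} + 1$:
\[
\delta_i = 6 \epsilon \cdot 12^i \leq \frac{6 \cdot 12^{\sqrt{\ln \Delta}+1}}{100 K \cdot 100^{\sqrt{\ln \Delta}}} = \frac{72}{100 K} \left(\frac{12}{100}\right)^{\sqrt{\ln \Delta}} \leq \frac{72}{100 K} < \frac{1}{K}.
\]
There is no real obstacle here; the only thing to be careful about is to use the crude but sufficient bound $12/100 < 1$ (so the tail factor is at most $1$), which is precisely why the constant $100$ was chosen in the definition of $\epsilon$. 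The lemma then follows immediately.
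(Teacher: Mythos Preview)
Your proof is correct and follows essentially the same approach as the paper: both derive $\delta_{i} = 12\delta_{i-1}$ from the recurrence, solve it with base case $\delta_0 = 6\epsilon$, and then substitute $\epsilon = 100^{-\sqrt{\ln\Delta}}/(100K)$ with $i \leq \sqrt{\ln\Delta}+1$ to obtain the bound $\tfrac{72}{100K}(12/100)^{\sqrt{\ln\Delta}} \leq 1/K$.
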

\begin{proof}
For each~$i > 0$, we may compute $\delta_i$ as
\begin{align*}
\delta_i &= \frac{D_i}{Z_i} = \frac{12 D_{i-1} \sqrt{\delta_{i-1}}}{D_{i-1}/\sqrt{\delta_{i-1}}} = 12 \delta_{i-1}
\end{align*}

As  $\delta_0 = D_0/Z_0 = 6 \epsilon$, we have $\delta_i = 6 \epsilon \cdot 12^i$ as claimed. Now, recalling our formula $\epsilon = 100^{-\sqrt{\ln \Delta}} / (100 K)$, we have 
$$
\delta_i \leq 6 \bigl( \frac{ 100^{-\sqrt{\ln \Delta}}}{100 K}  \bigr) \cdot  12^{\sqrt{\ln \Delta}+1} \leq \frac{ 72 }{100 K} 100^{-\sqrt{\ln \Delta}} 12^{\sqrt{\ln \Delta}} \leq 1/K.
$$
\end{proof}

\begin{corollary}
\label{gamma-cor}
For every $i \leq \lceil \sqrt{\ln \Delta} \rceil$, we have $\gamma_i \in [0,1]$.
\end{corollary}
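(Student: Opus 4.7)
The plan is to derive this corollary as an immediate consequence of Lemma~\ref{r-prop3}. Recall from Section~\ref{full-algorithm-sec} that $\gamma_i = 1 - 2\sqrt{D_{i-1}/Z_{i-1}} = 1 - 2\sqrt{\delta_{i-1}}$. So the two required inequalities $\gamma_i \leq 1$ and $\gamma_i \geq 0$ translate directly into $\sqrt{\delta_{i-1}} \geq 0$ and $\sqrt{\delta_{i-1}} \leq 1/2$, respectively.

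The upper bound $\gamma_i \leq 1$ is trivial since $\delta_{i-1} = D_{i-1}/Z_{i-1}$ is a ratio of nonnegative quantities, so $\sqrt{\delta_{i-1}} \geq 0$ and hence $\gamma_i \leq 1$.

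For the lower bound $\gamma_i \geq 0$, I would invoke Lemma~\ref{r-prop3}, which states that for all $i \leq \lceil \sqrt{\ln \Delta} \rceil$ we have $\delta_i \leq 1/K$. Since the corollary ranges $i$ from $1$ to $\lceil \sqrt{\ln \Delta}\rceil$, the relevant index $i-1$ ranges from $0$ to $\lceil \sqrt{\ln \Delta}\rceil - 1$, which lies within the range covered by Lemma~\ref{r-prop3}. Thus $\delta_{i-1} \leq 1/K \leq 1/4$, provided $K$ is taken sufficiently large (in particular, $K \geq 4$ suffices). This yields $2\sqrt{\delta_{i-1}} \leq 1$, and hence $\gamma_i \geq 0$.

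There is no real obstacle here; the corollary is essentially a bookkeeping check that the key parameter $\delta_{i-1}$ controlling $\gamma_i$ stays small throughout all the dense coloring steps. The substantive work has already been done in Lemma~\ref{r-prop3}, which solved the recurrence for $\delta_i$ using the chosen value of $\epsilon = 100^{-\sqrt{\ln \Delta}}/(100K)$.
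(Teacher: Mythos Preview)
Your proposal is correct and matches the paper's own argument: the paper simply states that the corollary ``follows immediately from Lemma~\ref{r-prop3} and the definition $\gamma_i = 1 - 2\sqrt{\delta_{i-1}}$,'' and you have spelled out precisely those details.
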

\begin{proof}
Follows immediately from Lemma~\ref{r-prop3} and the definition $\gamma_i = 1 - 2 \sqrt{\delta_{i-1}}$.
\end{proof}

\begin{lemma}
\label{r-prop2}
For all $5 \leq i \leq \lceil \sqrt{\ln \Delta} \rceil$, we have $D_i \leq 12^{i^2/2} \cdot 10^{-i \sqrt{\ln \Delta}} \cdot \Delta$.
\end{lemma}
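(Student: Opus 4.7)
The plan is to unroll the recurrence into a closed form and then substitute the value of $\epsilon$. First I would iterate the relation $D_{j+1} = 12 D_j \sqrt{\delta_j}$ to get
$$
D_i = D_0 \cdot 12^i \cdot \prod_{j=0}^{i-1} \sqrt{\delta_j}.
$$
Substituting $D_0 = 3\epsilon \Delta$ and the closed form $\delta_j = 6\epsilon \cdot 12^j$ from Lemma~\ref{r-prop3}, the product becomes $\prod_{j=0}^{i-1} \sqrt{6\epsilon \cdot 12^j} = (6\epsilon)^{i/2} \cdot 12^{i(i-1)/4}$, so
$$
D_i = 3 \cdot 6^{i/2} \cdot \epsilon^{(i+2)/2} \cdot 12^{i + i(i-1)/4} \cdot \Delta.
$$

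Next I would collect the powers of $12$: since $i + i(i-1)/4 = (i^2 + 3i)/4$, and for $i \geq 3$ we have $(i^2+3i)/4 \leq i^2/2$, the factor $12^{i+i(i-1)/4}$ is at most $12^{i^2/2}$. This already matches one of the two target factors. In particular, the hypothesis $i \geq 5$ is more than enough for this step.

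Third, I would handle the $\epsilon$-dependent factor $3 \cdot 6^{i/2} \cdot \epsilon^{(i+2)/2}$. Plugging in $\epsilon = 100^{-\sqrt{\ln \Delta}}/(100 K)$ yields
$$
3 \cdot 6^{i/2} \cdot \epsilon^{(i+2)/2} = \frac{3 \cdot 6^{i/2}}{(100 K)^{(i+2)/2}} \cdot 100^{-(i+2)\sqrt{\ln \Delta}/2}.
$$
For $K$ a sufficiently large constant the prefactor $3 \cdot 6^{i/2}/(100 K)^{(i+2)/2}$ is at most $1$, and rewriting $100^{-(i+2)\sqrt{\ln \Delta}/2} = 10^{-(i+2)\sqrt{\ln \Delta}}$ gives a bound of $10^{-(i+2)\sqrt{\ln \Delta}} \leq 10^{-i \sqrt{\ln \Delta}}$. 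Multiplying the three pieces ($\leq 1$, $10^{-i\sqrt{\ln \Delta}}$, and $12^{i^2/2}\Delta$) produces exactly $12^{i^2/2} \cdot 10^{-i \sqrt{\ln \Delta}} \cdot \Delta$.

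There is no real obstacle; the proof is a careful bookkeeping exercise. The only points to watch are (i) the inequality $(i^2+3i)/4 \leq i^2/2$, which requires $i \geq 3$ and so is safely covered by the hypothesis $i \geq 5$, and (ii) absorbing the constants $3$ and $6^{i/2}$ into the $(100K)^{(i+2)/2}$ denominator, which is routine once $K$ is taken sufficiently large.
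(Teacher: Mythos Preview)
Your proof is correct and follows essentially the same route as the paper: unroll the recurrence for $D_i$, plug in the closed form $\delta_j = 6\epsilon\cdot 12^j$ from Lemma~\ref{r-prop3}, substitute $\epsilon = 100^{-\sqrt{\ln\Delta}}/(100K)$, and simplify. The only cosmetic difference is in bookkeeping: the paper discards the leading $3\epsilon$ factor early (using $3\epsilon\leq 1$) and then crudely bounds $(6/(100K))^{i/2}\leq 12^{i/2}$, which inflates the exponent of $12$ to $i(i+5)/4$ and is why the paper needs $i\geq 5$; you instead keep the full $\epsilon^{(i+2)/2}$, so your exponent of $12$ is only $(i^2+3i)/4$ (needing merely $i\geq 3$) and you even gain an extra $10^{-2\sqrt{\ln\Delta}}$ of slack. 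Your remark that the prefactor $3\cdot 6^{i/2}/(100K)^{(i+2)/2}\leq 1$ requires $K$ sufficiently large is harmless but unnecessary---it already holds for $K\geq 1$.
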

\begin{proof}
We recursively compute $D_i$ from $D_0$ as:
\begin{equation}
\label{d-rec-eqn}
D_i = D_0 \cdot \prod_{j=0}^{i-1} 12 \sqrt{\delta_j}
\end{equation}
Using Lemma~\ref{r-prop3} we then estimate:
\begin{align*}
D_i &\leq 3 \epsilon \Delta \prod_{j=0}^{i-1}  12 \delta_j^{1/2} \leq 3 \epsilon \Delta \prod_{j=0}^{i-1} \Bigl( 12 (6 \epsilon \cdot 12^j)^{1/2} \Bigr) = (3\epsilon\Delta)\cdot \Bigl( 12^{i} (6\epsilon)^{i/2} \cdot 12^{i(i-1)/4} \Bigr) \\
&\leq \Delta \cdot \Bigl (12^{i} \bigl( \frac{6 \cdot 100^{- \sqrt{\ln \Delta}}}{100 K} \bigr)^{i/2} \cdot 12^{i(i-1)/4} \Bigr) \leq \Delta \cdot 12^{i(i+5)/4} \cdot 100^{-i \sqrt{\ln\Delta} /2} \\
&\leq \Delta \cdot 12^{i^2/2} \cdot 10^{-i \sqrt{\ln\Delta}}  \qquad \text{as $(i+5)/4 \leq i/2$ for $i\geq 5$}
\end{align*}
\end{proof}

\begin{corollary}
\label{cor-dbound}
We have the bound $D_{\lceil \sqrt{\ln \Delta} \rceil} = O(1)$.
\end{corollary}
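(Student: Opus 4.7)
The plan is to plug $i=\lceil \sqrt{\ln \Delta}\rceil$ directly into the bound from Lemma~\ref{r-prop2} and show that the resulting expression is bounded by an absolute constant. First, handle the degenerate regime: when $\Delta$ is smaller than some absolute constant (so that $\lceil\sqrt{\ln\Delta}\rceil < 5$ and Lemma~\ref{r-prop2} does not literally apply), the bound $D_i \leq D_0 = 3\epsilon\Delta$ from the initial conditions already gives $D_i = O(1)$ trivially, since $\Delta$ itself is a constant.

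For the main range where Lemma~\ref{r-prop2} applies, use the two elementary inequalities $i^2/2 \leq (\ln\Delta)/2 + \sqrt{\ln\Delta} + 1/2$ (from $i \leq \sqrt{\ln\Delta}+1$) and $10^{-i\sqrt{\ln\Delta}} \leq 10^{-\ln\Delta}$ (from $i\geq \sqrt{\ln\Delta}$). Substituting these into the lemma and collecting the terms involving $\Delta$ via $12^{(\ln\Delta)/2}=\Delta^{(\ln 12)/2}$ and $10^{-\ln\Delta}=\Delta^{-\ln 10}$ yields
$$
D_{\lceil\sqrt{\ln\Delta}\rceil} \;\leq\; \sqrt{12}\cdot 12^{\sqrt{\ln\Delta}}\cdot \Delta^{\,1+(\ln 12)/2 - \ln 10}.
$$

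The decisive step is the numerical inequality $1 + (\ln 12)/2 - \ln 10 < 0$ (the value is approximately $-0.06$). Setting $c = \ln 10 - 1 - (\ln 12)/2 > 0$, we obtain $D_{\lceil\sqrt{\ln\Delta}\rceil} \leq \sqrt{12}\cdot 12^{\sqrt{\ln\Delta}}\cdot \Delta^{-c}$. Since $12^{\sqrt{\ln\Delta}} = \exp((\ln 12)\sqrt{\ln\Delta})$ grows strictly more slowly than any positive power of $\Delta$, the factor $\Delta^{-c}$ dominates and the right-hand side tends to $0$ as $\Delta\to\infty$. In particular it is $O(1)$, which completes the proof.

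There is no real conceptual obstacle here — the entire argument reduces to a two-line substitution plus one numerical comparison of logarithms. The only thing to watch is that the constants $\epsilon = 100^{-\sqrt{\ln\Delta}}/(100K)$ and the shrinkage factor $12$ in Lemma~\ref{r-prop1} were deliberately chosen so that the exponent on $\Delta$ comes out negative; the corollary is essentially a check that this calibration was correct.
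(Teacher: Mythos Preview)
Your proof is correct and follows essentially the same approach as the paper's own proof: handle the small-$\Delta$ case directly via $D_i \leq D_0 = 3\epsilon\Delta$, then substitute $i = \lceil\sqrt{\ln\Delta}\rceil$ into Lemma~\ref{r-prop2} and check that the exponent of $\Delta$ works out to be negative. The paper compresses the final step to a single line ``$\Delta \cdot (\sqrt{12})^{(\sqrt{\ln\Delta}+1)^2}\cdot 10^{-\ln\Delta} = O(1)$'' without isolating the numerical inequality $1 + (\ln 12)/2 - \ln 10 < 0$ that you spell out, but the computation is identical.
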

\begin{proof}
If $\sqrt{\ln \Delta} \leq 5$, then $D_i \leq D_0 = 3 \epsilon \Delta = O(1)$. Otherwise, we apply Lemma~\ref{r-prop2}:
\begin{align*}
D_{\lceil \sqrt{\ln \Delta} \rceil} &\leq  \Delta \cdot 12^{\lceil \sqrt{\ln\Delta} \rceil^2 /2} \cdot 10^{- \sqrt{\ln\Delta} \lceil{\sqrt{\ln \Delta} \rceil}} \leq  \Delta \cdot (\sqrt{12})^{(\sqrt{\ln\Delta} + 1)^2} \cdot 10^{-\ln\Delta} \ = O(1)
\end{align*}
\end{proof}

\begin{theorem}
\label{th1}
At the end of the dense coloring steps, w.h.p. every dense vertex is connected to at most $O(\log n)\cdot 2^{O(\sqrt{\log \Delta})}$ other dense vertices.
\end{theorem}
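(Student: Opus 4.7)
The plan is to bound, for a dense vertex $v \in C_j$ remaining after the final dense coloring step, its dense neighborhood by splitting it disjointly into (i) \emph{external} dense neighbors (those outside $C_j$) and (ii) \emph{internal} dense neighbors (other uncolored members of $C_j$), then to bound each piece.

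For (i), I would directly invoke the recurrence: Lemma~\ref{r-prop0} applied at $i = r := \lceil \sqrt{\ln \Delta}\rceil$ gives $\extd_r(v) \le D_r$ w.h.p., and Corollary~\ref{cor-dbound} gives $D_r = O(1)$. Thus the external count is $O(1)$.

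For (ii), I would note that the internal count is at most $|C_j|-1$ (the current residual size of the almost-clique), and then unroll Proposition~\ref{aprop2} across the $r$ rounds. Writing $M_j^{(i)}$ for $|C_j|$ after the $i$-th dense coloring step, we start from $M_j^{(0)} \le (1+3\epsilon)\Delta$ (the corollary after Lemma~\ref{network2}, carried through the initial coloring step, which can only shrink almost-cliques), and get
\[
M_j^{(i+1)} \le \max\!\bigl(12K\ln n,\; 12 M_j^{(i)}\sqrt{\delta_i}\bigr).
\]
I would split into two cases. If the multiplicative branch $12 M_j^{(i)}\sqrt{\delta_i}$ dominates at every step, then
\[
M_j^{(r)} \le (1+3\epsilon)\Delta \cdot \prod_{i=0}^{r-1} 12\sqrt{\delta_i}
           = (1+3\epsilon)\Delta \cdot \frac{D_r}{D_0}
           = O\!\Bigl(\tfrac{1}{\epsilon}\Bigr) = 2^{O(\sqrt{\log \Delta})},
\]
using the defining recurrence $D_{i+1} = 12 D_i\sqrt{\delta_i}$ together with $D_0 = 3\epsilon\Delta$ and $D_r = O(1)$ from Corollary~\ref{cor-dbound}. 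Otherwise, at some step $i^\star$ the $12K\ln n$ clause becomes active, i.e.\ $M_j^{(i^\star+1)} \le 12K\ln n$, and by monotonicity of cluster sizes (vertices only leave $C_j$ once permanently colored) we keep $M_j^{(r)} \le 12K\ln n = O(\log n)$ thereafter.

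Combining the two pieces gives total dense-degree at most $O(1) + \max\!\bigl(O(\log n),\,2^{O(\sqrt{\log \Delta})}\bigr) = O(\log n)\cdot 2^{O(\sqrt{\log \Delta})}$, which is the claimed bound. The only real obstacle is a clean union bound: Lemma~\ref{r-prop0} and each invocation of Proposition~\ref{aprop2} each fail with probability $n^{-K'}$ for an arbitrarily large constant $K'$, and I would absorb the $O(\sqrt{\log \Delta})$ rounds and the at most $n$ dense vertices and almost-cliques by enlarging $K'$, exactly as in the proof of Lemma~\ref{r-prop0}. No new probabilistic machinery beyond what Section~\ref{color-dense} already provides is needed.
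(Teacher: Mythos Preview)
Your overall decomposition into external and internal dense neighbors is right, and the telescoping identity $\prod_{i=0}^{r-1} 12\sqrt{\delta_i} = D_r/D_0$ is exactly the trick the paper uses. However, there is a genuine gap: you invoke Lemma~\ref{r-prop0} at $i=r=\lceil\sqrt{\ln\Delta}\rceil$ and Proposition~\ref{aprop2} at every round up to $r$ without checking the regularity conditions. Lemma~\ref{r-prop0} explicitly assumes (R1) and (R2) hold for all $j=0,\dots,i-1$, and all of Section~\ref{color-dense} (hence Proposition~\ref{aprop2}) is stated under those hypotheses. Condition (R1), namely $D_j\delta_j \ge K\ln n$, cannot hold through round $r-1$ in general: since $D_r=O(1)$ by Corollary~\ref{cor-dbound} and $\delta_r\le 1/K$ by Lemma~\ref{r-prop3}, the product $D_j\delta_j$ (which one checks is decreasing) drops below $K\ln n$ at some earlier round whenever $n$ is large. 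Once (R1) fails, neither Lemma~\ref{r-prop0} nor Proposition~\ref{aprop2} can be applied, so your claimed bounds $\extd_r(v)\le D_r=O(1)$ and $M_j^{(r)}\le 2^{O(\sqrt{\log\Delta})}$ (in your first case) are not justified.

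The paper's proof is built precisely around this obstruction. It lets $i^*\le r$ be the first round where (R1) fails, applies the recurrence only through round $i^*$, and then uses the \emph{failure} of (R1) itself as the bound: $D_{i^*}\le K\ln n/\delta_{i^*}\le K\ln n/(6\epsilon)=O(\log n)\cdot 2^{O(\sqrt{\log\Delta})}$. This is why the external degree bound carries a $\log n$ factor rather than being $O(1)$, and likewise why the internal bound via the telescoping product picks up the same factor (since $D_{i^*}$, not $D_r$, appears in the numerator). Monotonicity then carries both bounds from round $i^*$ to round $r$. Your monotonicity argument in the second case is fine, but your case split (on which branch of the $\max$ in Proposition~\ref{aprop2} dominates) is not the relevant dichotomy; the correct stopping time is governed by (R1).
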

\begin{proof}
Let $i^* \leq \lceil \sqrt{\ln \Delta} \rceil$ be minimal such that $D_{i^*} \delta_{i^*} \leq K \ln n$; Corollary~\ref{cor-dbound} ensures such an $i^*$ exists. 

The regularity conditions are satisfied up to round~$i^*$. Noting that $\delta_{i^*} \geq \epsilon = 2^{-\Theta(\sqrt{\log \Delta})}$, Lemma~\ref{r-prop0} shows that:
\begin{equation}\label{eqn:Dbound}
\extd_{i^*}(v) \leq D_{i^*} \leq (K \ln n)/\delta_{i^{*}} = O( \log n ) \cdot 2^{O(\sqrt{\log \Delta})}
\end{equation}

Next, we bound the size of each almost-clique $C_j$. The initial size of $C_j$ is at most $(1+3 \epsilon) \Delta$. Applying Proposition~\ref{aprop2} repeatedly for $i < i^*$ shows that the size of $C_j$ reduces to 
$$
\max(12 K\ln n, (1+3\epsilon) \Delta \cdot \prod_{i=0}^{i^*-1} 12 \sqrt{\delta_i})
$$
w.h.p. We bound the latter term as follows:
{\allowdisplaybreaks
\begin{align*}
(1+3\epsilon) \Delta \cdot \prod_{i=0}^{i^*-1} 12 \sqrt{\delta_j} &= (1+3\epsilon) \Delta \cdot \frac{ D_{i*}}{3 \epsilon \Delta} \qquad \text{by (\ref{d-rec-eqn}) and $D_0 = 3\epsilon\Delta$} \\
 &\leq \frac{(1+3\epsilon)K \ln n}{3 \epsilon} \cdot 2^{O(\sqrt{\log \Delta})} \qquad \text{by (\ref{eqn:Dbound})} \\
&= O(\log n) \cdot 2^{O(\sqrt{\log \Delta})} \qquad \text{as $1/\epsilon  2^{\Theta(\sqrt{\log \Delta})}$}
\end{align*}
}

We have shown that  $v \in C_j$ has $O(\log n)\cdot 2^{O(\sqrt{\log \Delta})}$ external neighbors and  $O(\log n)\cdot 2^{O(\sqrt{\log \Delta})}$ neighbors $w \in C_j$ after round $i^*$. Dense coloring steps after round $i^*$ can only decrease the degree of $v$, so $v$ has $O(\log n)\cdot 2^{O(\sqrt{\log \Delta})}$ dense neighbors after round $\lceil \sqrt{\ln \Delta} \rceil$. 
\end{proof}

\section{List-coloring locally-sparse graphs}
\label{list-color-locally-sparse-sec}
Although the overall focus of this paper is an algorithm for coloring arbitrary graphs in time $O(\sqrt{\log \Delta}) + 2^{O(\sqrt{\log \log n})}$, we note that our initial coloring step may also be used to obtain a faster list-coloring algorithm for sparse graphs. This result extends the work of~\cite{elk15}, which showed a similar type of $(\Delta+1)$-coloring algorithm for graphs which satisfy a property they refer to as \emph{local sparsity.} We define this property and show that it is essentially equivalent to the definition of sparsity defined in Section~\ref{decomp-sec}.
\begin{definition}
A graph~$G$ is \emph{$(1-\delta)$-locally sparse} if very vertex contains at most $(1-\delta) \binom{\Delta}{2}$ edges in its neighborhood, for some parameter $\delta \in [0,1]$. (That is, the induced subgraph $G[ N(v) ]$ contains $\leq (1 - \delta) \binom{\Delta}{2}$ edges).
\end{definition}

\begin{lemma}\label{sparse-equiv-prop}
Suppose that $G$ is $(1-\delta)$-locally sparse. If we apply the network decomposition of Section~\ref{decomp-sec} with parameter $\epsilon = \delta/2$, then every vertex is sparse, i.e. $V^{\text{sparse}} = V$.
\end{lemma}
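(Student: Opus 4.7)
I would argue by contrapositive: assume some vertex $v \in V$ is dense, and derive that the induced subgraph $G[N(v)]$ contains more than $(1-\delta)\binom{\Delta}{2}$ edges, contradicting local sparsity.

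First I would unwind the definitions. Since $v$ is dense (with the decomposition run at parameter $\epsilon = \delta/2$), it has at least $(1-\epsilon)\Delta = (1-\delta/2)\Delta$ friends, all of which are in $N(v)$ since friend edges are edges of $G$. For each friend $x \in F(v)$, the definition of a friend edge gives $|N(x) \cap N(v)| \geq (1-\epsilon)\Delta$; note that $N(x) \cap N(v)$ is exactly the set of neighbors of $x$ inside $N(v)$.

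Next I would lower-bound the number of edges in $G[N(v)]$ by double counting. We have
\begin{align*}
2 \cdot |E(G[N(v)])| \;=\; \sum_{x \in N(v)} |N(x) \cap N(v)| \;\geq\; \sum_{x \in F(v)} |N(x) \cap N(v)| \;\geq\; (1-\epsilon)\Delta \cdot (1-\epsilon)\Delta.
\end{align*}
Hence $|E(G[N(v)])| \geq \tfrac{1}{2}(1-\epsilon)^2 \Delta^2 = \tfrac{1}{2}(1-\delta/2)^2 \Delta^2$.

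Finally I would show that this strictly exceeds $(1-\delta)\binom{\Delta}{2}$. Expanding $(1-\delta/2)^2 = 1 - \delta + \delta^2/4$, one computes
\begin{align*}
\tfrac{1}{2}(1-\delta/2)^2 \Delta^2 - (1-\delta)\binom{\Delta}{2} \;=\; \tfrac{\Delta}{2}\bigl[(1-\delta/2)^2\Delta - (1-\delta)(\Delta-1)\bigr] \;=\; \tfrac{\Delta}{2}\bigl[(1-\delta) + \tfrac{\delta^2 \Delta}{4}\bigr] \;>\; 0,
\end{align*}
which contradicts the assumption that $G$ is $(1-\delta)$-locally sparse at $v$. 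Thus no vertex can be dense, i.e.\ $V^{\text{sparse}} = V$. There is no real obstacle here; the only subtlety is remembering that friends are neighbors (so $F(v) \subseteq N(v)$) and that the local-sparsity bound is on edges in $G[N(v)]$, both of which make the double-count immediate.
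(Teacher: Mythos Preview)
Your proof is correct and follows essentially the same approach as the paper: assume a vertex $v$ is dense, use that its $(1-\epsilon)\Delta$ friends each contribute $(1-\epsilon)\Delta$ neighbors inside $N(v)$, double-count to get $|E(G[N(v)])| \geq \tfrac{1}{2}(1-\epsilon)^2\Delta^2$, and compare with $(1-\delta)\binom{\Delta}{2}$. Your final step is in fact slightly more careful than the paper's, which only verifies $\tfrac{1}{2}(1-\epsilon)^2\Delta^2 \geq (1-2\epsilon)\binom{\Delta}{2}$ without checking strictness.
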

\begin{proof}
Suppose that $v \in V$ is dense with respect to~$\epsilon$. Then $v$ has at least~$(1 - \epsilon) \Delta$ friends. Each such friend~$u$ corresponds to at least~$(1-\epsilon) \Delta$ edges between $u$ and another neighbor of~$v$, that is, an edge in $G[ N(v) ]$. Furthermore, any such edge in $G[ N(v) ]$ is counted at most twice, so $G[ N(v) ]$ must contain $(1-\epsilon)^2 \Delta^2/2 \geq (1 - 2 \epsilon) \binom{\Delta}{2}$ edges, which contradicts our hypothesis for $\epsilon \geq \delta/2$.
\end{proof}

\begin{corollary}
Suppose that $G$ is $(1-\delta)$-locally-sparse and that every vertex has a palette of size exactly~$\Delta+1$. Then $G$ can be list-colored w.h.p. in $O( \log(1/\delta)) + 2^{O(\sqrt{\log \log n})}$ rounds.
\end{corollary}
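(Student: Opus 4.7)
The plan is to combine Lemma~\ref{sparse-equiv-prop}, the initial coloring step of Section~\ref{sec:firstcoloring}, and the list-coloring subroutine of Elkin, Pettie and Su~\cite{elk15}. First I would set $\epsilon = \delta/2$. If the standing assumption $\epsilon^4 \Delta \ge K \ln n$ fails, then $\Delta = \operatorname{polylog}(n)$, and the algorithm of \cite{BEPS16} already list-colors $G$ in $O(\log \Delta) + 2^{O(\sqrt{\log\log n})} = 2^{O(\sqrt{\log\log n})}$ rounds, which is within the target bound; so I may assume that the regularity bound on $\epsilon^4 \Delta$ holds and Theorem~\ref{initial-coloring-good-prop} is available.

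Next I would run the decomposition of Section~\ref{decomp-sec} with parameter $\epsilon$. By Lemma~\ref{sparse-equiv-prop} every vertex is sparse, so $V^{\text{sparse}} = V$ and there are no almost-cliques to process. I would then perform a single initial coloring step as in Algorithm~\ref{algo:app0}. Theorem~\ref{initial-coloring-good-prop} guarantees, w.h.p., that every vertex $v$ satisfies (A1) $Q_0(v) \ge \Delta/2$ and, because every vertex is sparse, also (A2) $S_0(v) = \Omega(\epsilon^2 \Delta) = \Omega(\delta^2 \Delta)$. This step takes only $O(1)$ rounds.

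At this point the residual graph is a list-coloring instance in which every vertex has palette-to-degree surplus $\Omega(\delta^2 \Delta)$, so I would feed it into the list-coloring continuation procedure of \cite{elk15}. That procedure requires only a large surplus; it is indifferent to whether the palettes originate from a shared $(\Delta+1)$-coloring or from arbitrary per-vertex lists. It finishes in $O(\log(1/\delta^2)) + 2^{O(\sqrt{\log\log n})} = O(\log(1/\delta)) + 2^{O(\sqrt{\log\log n})}$ rounds, which dominates the $O(1)$ cost of the decomposition and the initial step and yields the claimed bound.

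The only conceptual point worth flagging is that the initial coloring step of \cite{elk15} was tailored to the shared-palette $(\Delta+1)$-coloring setting, whereas the initial coloring step of Section~\ref{sec:firstcoloring} nowhere uses that the palettes coincide: the entire chain Lemmas~\ref{sparse-prop2}--\ref{sparse-conc-prop} treats $\pal(v)$ as an arbitrary per-vertex set of size $\Delta+1$. Substituting this step for the initial step of \cite{elk15} is precisely what lifts their result from $(\Delta+1)$-coloring to true list-coloring, so I would not expect any further obstacle beyond verifying this palette-agnosticism of the analysis already carried out in Section~\ref{sec:firstcoloring}.
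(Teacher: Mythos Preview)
Your approach is essentially identical to the paper's: set $\epsilon=\delta/2$, invoke Lemma~\ref{sparse-equiv-prop} so that every vertex is sparse, run the initial coloring step to create a surplus of $\Omega(\delta^2\Delta)$ via Theorem~\ref{initial-coloring-good-prop}, and finish with the surplus-based subroutine of~\cite{elk15}.

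There is one small slip in your case split. When $\epsilon^4\Delta < K\ln n$ you assert that $\Delta=\operatorname{polylog}(n)$ and hence the~\cite{BEPS16} algorithm costs only $2^{O(\sqrt{\log\log n})}$ rounds. That inference is borrowed from Section~\ref{full-algorithm-sec}, where $\epsilon$ was \emph{chosen} as $2^{-\Theta(\sqrt{\log\Delta})}$; here $\epsilon=\delta/2$ is an external parameter, so $\epsilon^4\Delta < K\ln n$ only gives $\Delta \le O(\ln n)/\delta^4$, which need not be polylogarithmic. The fix is exactly what the paper does: from $\Delta \le O(\ln n)/\delta^4$ you get $\log\Delta = O(\log(1/\delta)+\log\log n)$, so the~\cite{BEPS16} bound $O(\log\Delta)+2^{O(\sqrt{\log\log n})}$ is still $O(\log(1/\delta))+2^{O(\sqrt{\log\log n})}$, matching the target. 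With that correction your argument is complete and matches the paper's.
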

\begin{proof}
By Proposition~\ref{sparse-equiv-prop}, every vertex in $G$ is sparse with respect to parameter $\epsilon = \delta/2$. 

First suppose that $\delta^4 \Delta \geq K \ln n$, where $K$ is a sufficiently large constant. Then by Theorem~\ref{initial-coloring-good-prop}, each vertex satisfies $S_0(v) = \Omega(\epsilon^2 \Delta)$ w.h.p. The algorithm of~\cite{elk15} applied to the residual graph runs in $O(\log(1/\delta^2)) + 2^{O(\sqrt{\log \log n})}$ rounds.

Next, suppose that $\delta^4 \Delta \leq K \ln n$. Then the coloring algorithm of~\cite{BEPS16} runs in $O(\log \Delta) + 2^{O(\sqrt{\log \log n})} = O(\log(1/\delta)) + 2^{O(\sqrt{\log \log n})}$ rounds.
\end{proof}

\section{Conclusions}
Distributed symmetry breaking tasks such as coloring or MIS lie at the heart of distributed computing. We have shown that the $(\Delta+1)$-coloring problem is easier than MIS. However, there is still a large gap in the round complexity between the best lower bound of $\Omega(\log^* n)$ and our upper bound of $O(\sqrt{\log \Delta})+ 2^{O(\sqrt{\log \log n})}$. Recent advances for lower bounds~\cite{Brandt16, ChangP17, hef16, chlpj17} for the \textsf{LOCAL} model and its variants might yield inspiration for improving the existing lower bound. Our deterministic decomposition into locally sparse and dense parts might foster additional advances as well. It might help to further reduce upper bounds for symmetry breaking tasks -- in particular for deterministic algorithms, since there exist efficient algorithms for (very) dense graphs, e.g. ~\cite{sch10opt}, and sparse graphs, e.g.~\cite{moh17}. Furthermore, the gap for $(\Delta+1)$-coloring between our randomized algorithm running in time~$O(\sqrt{\log \Delta})+ 2^{O(\sqrt{\log \log n})}$ and the best deterministic algorithm requiring $O(\sqrt{\Delta}\log^{2.5}\Delta + \log^{*} n)$~\cite{FraigniaudHK15} is more than exponential and, therefore, larger than any known separation result for randomized and deterministic algorithms~\cite{chan16}.

\medskip
\noindent \textbf{Acknowledgments:} We would like to thank Seth Pettie for his valuable comments. We thank the anonymous reviewers for many helpful suggestions and comments.

\bibliographystyle{abbrv}
\bibliography{coloralg-references}
\end{document}